\DeclareMathOperator{\Tr}{Tr}
\DeclareMathOperator{\Var}{Var}
\DeclareMathOperator*{\argmax}{arg\,max}
\newcommand{\one}{\mathbf{1}}
\newcommand{\Xdeg}{\mbox{$X$-degree}}
\newcommand{\Xnb}{\mbox{$X$-{NB}}}
\theoremstyle{remark}
\newtheorem{remark}{Remark}
\newtheorem{problem}{Problem}
\newtheorem{corollary}{Corollary}
\newtheorem{theorem}{Theorem}[section]
\newtheorem{lemma}[theorem]{Lemma}
\newtheorem{proposition}[theorem]{Proposition}
\begin{document}

\title{Node Immunization with Non-backtracking Eigenvalues}

\author{
  Leo Torres \\
  \texttt{leo@leotrs.com} \\
  Network Science Institute, \\
  Northeastern University \\
  \and
  Kevin S. Chan \\
  \texttt{kevin.s.chan.civ@mail.mil} \\
  U.S. Army Research Lab
  \and
  Hanghang Tong \\
  \texttt{htong@illinois.edu} \\
  Department of Computer Science, \\ University of Illinois at Urbana-Champaign
  \and
  Tina Eliassi-Rad \\
  \texttt{tina@eliassi.org} \\
  Network Science Institute and \\
  Khoury College of Computer Sciences, \\ Northeastern University
}

\date{}
\maketitle

\begin{abstract}
The \emph{non-backtracking matrix} and its eigenvalues have many applications in network science and graph mining, such as node and edge centrality, community detection, length spectrum theory, graph distance, and epidemic and percolation thresholds. Moreover, in network epidemiology, the reciprocal of the largest eigenvalue of the non-backtracking matrix is a good approximation for the epidemic threshold of certain network dynamics. In this work, we develop techniques that identify which nodes have the largest impact on the leading non-backtracking eigenvalue. We do so by studying the behavior of the spectrum of the non-backtracking matrix after a node is removed from the graph. From this analysis we derive two new centrality measures: \emph{\Xdeg{}} and \emph{\mbox{X-non-backtracking centrality}}. We perform extensive experimentation with targeted immunization strategies derived from these two centrality measures. Our spectral analysis and centrality measures can be broadly applied, and will be of interest to both theorists and practitioners alike.
\\
\textbf{Keywords.} Non-backtracking matrix, epidemic threshold, perturbation analysis
\end{abstract}

%%%%%%%%%%%%%%%%%%%%%%%%%%%%%%%%%%%%%%%%
%%% Introduction
%%%%%%%%%%%%%%%%%%%%%%%%%%%%%%%%%%%%%%%%

\section{Introduction}\label{sec:introduction}
A \emph{non-backtracking walk} in a graph is a sequence of pairwise adjacent edges such that no edge is traversed twice in succession, i.e., the walk does not contain \emph{backtracks}. Non-backtracking walks are known to mix faster than standard random walks \citep{alon07}, whereas \emph{non-backtracking cycles} (i.e. closed non-backtracking walks) contain important topological information from the so-called \emph{length spectrum} of the graph \citep{Torres2019}. The associated \emph{non-backtracking matrix} is the unnormalized transition matrix of a random walker that does not trace backtracks, and it has many applications such as community detection \citep{bordenave2015,krzakala2013spectral}, influencer identification \citep{morone2016collective,morone2015influence}, graph distance \citep{Torres2019,mellor2019}, etc. Additionally, the \emph{non-backtracking centrality} of nodes, defined in terms of the principal eigenvector of the non-backtracking matrix, has more desirable properties than the standard eigenvector centrality \citep{martin2014localization}. % (Localization happens when most of the eigenvector's weight is put on a single node, which renders it useless to effectively rank the rest of the nodes.)
The non-backtracking framework has also been adapted to directed networks \citep{arrigo2017directed}, weighted networks \citep{kempton2016non}, and multi-layer networks \citep{arrigo2018exponential}. In this paper, to avoid repetition we use the prefix ``NB'' to mean non-backtracking. For example, we refer to the non-backtracking matrix as the NB-matrix.

The eigenvalues of the NB-matrix (or NB-eigenvalues for short) are related to certain kinds of spreading dynamics. \citet{karrer2014percolation} and \citet{hamilton2014tight} showed that the percolation threshold is approximated by the inverse of the largest NB-eigenvalue $\lambda_1$. This implies that the epidemic threshold of susceptible-infectious-recovered (SIR) dynamics can also be approximated by $\lambda_1$ \citep{pastor2015epidemic,newman2002spread}. Furthermore, \citet{shrestha2015message} argued the same for susceptible-infectious-susceptible (SIS) dynamics, though \citet{castellano2018relevance} highlight that this may only hold for networks with certain amounts of degree heterogeneity, and propose a fully non-backtracking version of SIS dynamics where the NB-walks also play a large role. Whether one is talking about the percolation threshold or the epidemic threshold on SIR or SIS dynamics, $\lambda_1$ of the NB-matrix provides a better approximation to the true epidemic threshold than the largest eigenvalue of the adjacency matrix \citep{shrestha2015message,karrer2014percolation}.

Given the importance of the largest NB-eigenvalue in network dynamics, we ask: \textbf{which nodes have the largest influence on the largest NB-eigenvalue?} In the cases of SIR and SIS dynamics, answering this question will lead to targeted immunization strategies, as it is equivalent to asking which are the nodes whose removal from the network causes the epidemic threshold to increase the most. In the case of percolation, this is equivalent to determining which nodes' removal will put the network closer to splitting into many connected components. Operationally, we frame this question as follows. 

\begin{problem}
Consider a graph $G$ with largest NB-eigenvalue $\lambda_1$. Given an arbitrary node $c$, define $\lambda_1(c)$ as the largest NB-eigenvalue of the network after removing $c$. Define $\lambda_1 - \lambda_1(c)$ as the \emph{eigen-drop induced by $c$}. \textbf{Which node $c$ induces the maximum eigen-drop?}
\end{problem}

The contributions of the present work are as follows:
\begin{itemize}
\item We develop the spectral theory of the NB-matrix to study the behavior of its eigenvalues under the removal of a node.
\item For Problem 1, we propose two new centrality measures, \emph{\Xdeg{}} and \emph{\mbox{$X$-non-backtracking} (or \Xnb{}) centrality}. Further, \Xdeg{} can be computed in approximately log-linear time in the number of nodes.
\item Our experiments show that immunization strategies induced by \Xdeg{} and \Xnb{} centrality are more effective than other methods.
\end{itemize}

In Section~\ref{sec:background} we present the necessary background theory. In Section~\ref{sec:theory} we develop a spectral perturbation theory of the NB-matrix. We use this theory in Section~\ref{sec:immunization} to introduce two new centrality measures and argue why they are effective at identifying nodes with largest eigen-drops. In Section~\ref{sec:related-work} we review previous studies related to the present work. In Section~\ref{sec:experiments} we provide experimental evidence for our claims. We conclude the paper in Section~\ref{sec:conclusion}.

%%%%%%%%%%%%%%%%%%%%%%%%%%%%%%%%%%%%%%%%
%%% Background
%%%%%%%%%%%%%%%%%%%%%%%%%%%%%%%%%%%%%%%%

\section{Background}\label{sec:background}

Let $G$ be a simple undirected graph with node set $V$ and edge set $E$. We consider the set of directed edges $\overline{E}$ where each undirected edge $(i, j) \in E$ gives rise to two directed edges $i \to j \in \overline{E}$ and $j \to i \in \overline{E}$. A \emph{walk} in $G$ is a sequence of directed edges $i_1 \to j_1$, \ldots, $i_k \to j_k$, where $j_r = i_{r+1}$ for each $r=1,\ldots, k-1$. Here, $k$ is the \emph{length} of the walk. A walk is \emph{closed} if $j_k = i_1$. A \emph{backtrack} is a walk of length $2$ of the form $i \to j, j \to i$. A walk is a \emph{non-backtracking walk}, or \emph{NB-walk}, if no two consecutive edges in it form a backtrack. The \emph{non-backtracking matrix}, or \emph{NB-matrix}, $B$ is the unnormalized transition matrix of a walker that does not perform backtracks. Concretely, $B$ is indexed in the rows and columns by elements of $\overline{E}$. Let $m = |E|$, then $B$ is of size $2m \times 2m$, and it is defined by
\begin{equation}\label{eqn:nbm-element}
B_{k \to l, i \to j} = \delta_{jk} \left( 1 - \delta_{il} \right),
\end{equation}
where $\delta$ is the Kronecker delta. In words, $B_{k \to l, i \to j}$ is $1$ iff $j = k$ and $i \to j, j \to l$ is not a backtrack. Notably, the powers of $B$ count the number of NB-walks in $G$, i.e. $\left( B^r \right)_{k \to l, i \to j}$ is the number of NB-walks that start with $i \to j$ and end with $k \to l$ of length $r+1$.

Among other applications, the NB-matrix has been used to define a notion of node centrality that has more desirable properties than the usual eigenvector centrality \citep{martin2014localization,radicchi2016leveraging}. Concretely, let $\lambda$ be the largest eigenvalue of $B$ and let $\mathbf{v}$ be the corresponding unit right eigenvector. By Perron-Frobenius theory, $\lambda$ is positive, real, and has multiplicity one, while $\mathbf{v}$ can be chosen to be non-negative. The \emph{non-backtracking centrality} of a node $i$ is defined as
\begin{equation}\label{eqn:nb-centrality}
\mathbf{v}^{i} = \sum_j a_{ij} \mathbf{v}_{j \to i},
\end{equation}
where $A = \left( a_{ij} \right)$ is the adjacency matrix of $G$. Now let $D$ be the diagonal matrix with the degree of each node, and let $\mathbf{v}_{aux}$ be the left principal eigenvector of
\begin{equation}\label{eqn:aux-nb-matrix}
B_{aux} =
\left(
\begin{array}{cc}
0 & D - I_n \\
- I_n & A
\end{array}
\right),
\end{equation}
where $I_r$ is the identity matrix of size $r$. We have that $\mathbf{v}_{aux} = \left( \mathbf{f}, -\lambda \mathbf{f} \right)$, where $\mathbf{f}$ is of size $n$, and it is known that $\mathbf{f}$ is parallel to the vector of NB-centralities, $\mathbf{f}^i \propto \mathbf{v}^i$ \cite{martin2014localization}. It is more efficient to use $B_{aux}$ than $B$ when computing the NB-centrality, since the former matrix is of size $2n \times 2n$, where $n = |V|$.

The NB-matrix is not symmetric and therefore its eigenvalues, other than the largest one, can be complex numbers. Even so, it contains a subtle structure, sometimes called PT-symmetry \citep{bordenave2015}. Indeed, let $P$ be the matrix such that $P \mathbf{x}_{i \to j} = \mathbf{x}_{j \to i}$ for any vector $\mathbf{x}$ indexed by $\overline{E}$. It is readily checked that (i) the product $PB$ is symmetric, and (ii) there exists a basis where $P$ can be written as
\begin{equation}\label{eqn:p}
P =
\left(
\begin{array}{cc}
0 & I_m \\
I_m & 0
\end{array}
\right).
\end{equation}

%%%%%%%%%%%%%%%%%%%%%%%%%%%%%%%%%%%%%%%%
%%% Methodology
%%%%%%%%%%%%%%%%%%%%%%%%%%%%%%%%%%%%%%%%

\section{Non-backtracking eigenvalues under node removal}\label{sec:theory}

We are interested in the behavior of the NB-eigenvalues when we remove a node from $G$. Suppose the target node we want to remove is $c \in V$, and partition the edges in $\overline{E}$ as those that are incident to $c$ and those that are not. Sort the rows and columns of $B$ accordingly, so that it takes the block form
\begin{equation}\label{eqn:block-form}
B =
\left(
\begin{array}{cc}
B' & D \\
E & F
\end{array}
\right),
\end{equation}
as shown in Figure~\ref{fig:blue-yellow}. Here, $B'$ is the NB-matrix of the graph after node $c$ is removed, while $F$ is the NB-matrix of the star graph centered at $c$; if $d$ is the degree of $c$, then $F$ is of size $2d \times 2d$. Further, $D$ is indexed in the rows by directed edges not incident to $c$, and in the columns by directed edges incident to $c$, and vice versa for $E$.

\begin{figure}
    \centering
    \includegraphics[width=0.8\textwidth]{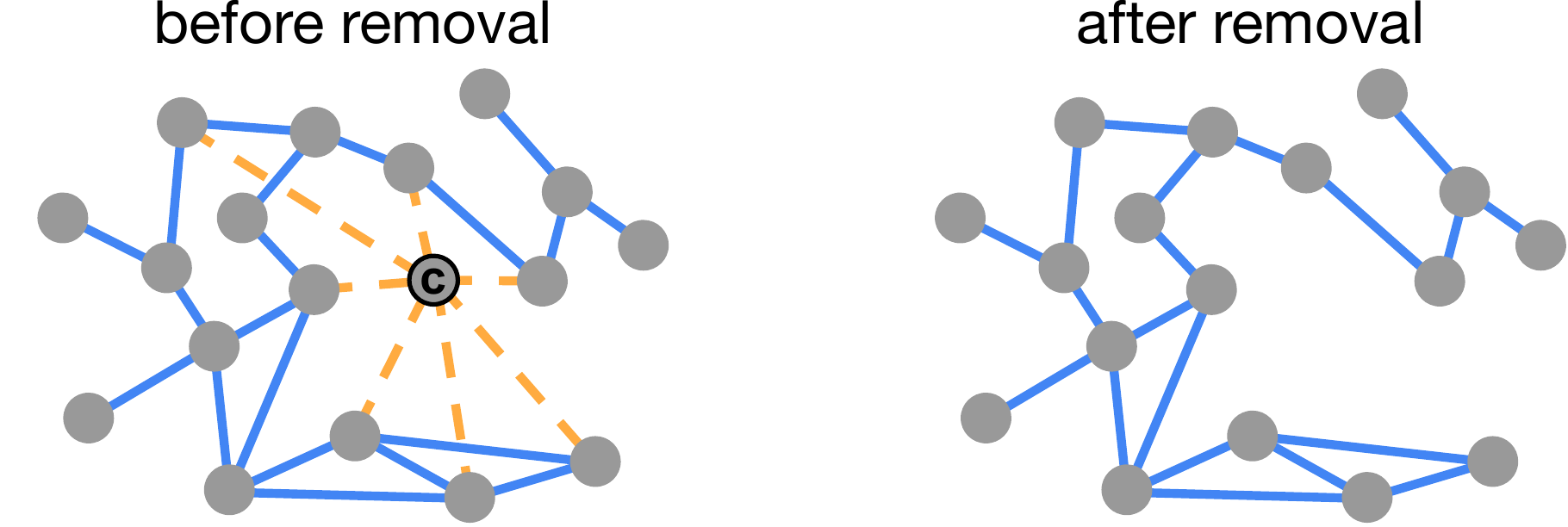}
    
    \begin{flalign*}
    \quad
    B = \left( \begin{array}{c;{2pt/2pt}c} \begin{array}{ccc}  &   & \\  & B' & \\  &   & \\   \end{array}  & D \\  \hdashline[2pt/2pt] E & F \\ \end{array} \right) \hspace{-2.5em} \begin{tabular}{l} $\left.\phantom{\begin{array}{c} \\ D \\ \\ \end{array}}\right\} 2m - 2d $ \\ $\left.\phantom{\begin{array}{c} F \end{array}}\right\} 2d $ \end{tabular}
    \quad\quad\quad\quad
    \left( \begin{array}{ccc}  &   & \\  & B' & \\  &   & \\   \end{array} \right) %\hspace{-2em}
    \quad\quad\quad
    %\left.\phantom{\begin{array}{c} \\ D \\ \\ \end{array}}%\right\} 2m
    \end{flalign*}
    
    \caption{\emph{Top: } Graph $G$ with target node $c$ before and after removal. $G$ has $m$ edges and $c$ has degree $d$. Dashed yellow edges are incident to $c$, all other edges in solid blue. \emph{Bottom: } Corresponding NB-matrices before and after removal.}
    \label{fig:blue-yellow}
\end{figure}

\subsection{The characteristic polynomial}\label{sec:characteristic-polynomial}
The NB-eigenvalues are the roots of the characteristic polynomial $\det \left( B - t I \right)$. The theory of Schur complements gives us an identity for the determinant of a block matrix,
\begin{align}
\det \left( B- tI \right) &= 
\left|
\begin{array}{cc}
B' - tI & D \\
E & F - tI
\end{array}
\right| \\
&= \det \left( F - tI \right) \det \left( B' - tI - D \left( F- tI \right)^{-1} E \right),
\label{eqn:schur}
\end{align}

where the size of $I$ is given by context. This formula holds whenever $\left( F - tI \right)$ is invertible, i.e., whenever $t$ is not an eigenvalue of $F$. To simplify this expression, we make the following observations.

\begin{lemma}\label{lem:de-ff}
Let $d$ be the degree of target node $c$. With $D, E, F$ as in Equation~(\ref{eqn:block-form}), we have $DE = 0$ and $F^2 = 0$. Therefore, $F$ is nilpotent, that is, all its eigenvalues are zero, and hence $\det \left( F - tI \right) = t^{2d}$. Finally, we have $\left( F - tI \right)^{-1} = - \left( F + tI \right) / t^2$ when $t \neq 0$.
\end{lemma}

\begin{proof}
Since $D, E, F$ are sub-matrices of $B$, their element is given by Equation~(\ref{eqn:nbm-element}). Hence, computing $DE$ and $F^2$ is straightforward, as long as care is placed in keeping track of the appropriate indices for the rows and columns of the involved matrices. Now, $F^2 = 0$ implies that all its eigenvalues are zero and that $\det \left( F - tI \right) = t^{2d}$. Finally, one can manually check that $\left( F - tI \right) \left( F + tI \right) = - t^2 I$.
\end{proof}

Now define $X = DFE$. One can manually check that 
\begin{equation}\label{eqn:x}
X_{k \to l, i \to j} = a_{ck}a_{cj} \left( 1 - \delta_{kj} \right).
\end{equation}
Per the Lemma, Equation~(\ref{eqn:schur}) holds for $t \neq 0$ and
\begin{align}
\det \left( B - tI \right) &= t^{2d} \det \left( B' - tI + \frac{DFE}{t^2} + \frac{DE}{t} \right) \\
&= t^{2d} \det \left( B' - tI + \frac{X}{t^2} \right).
\label{eqn:b-prime-x}
\end{align}

\begin{theorem}\label{thm:char-poly}
For a graph $G$ and target node $c$, suppose the NB-matrix of $G$ is $B$ and the NB-matrix after removing $c$ is $B'$, and let $X$ be as in Equation~(\ref{eqn:b-prime-x}). If $t$ is not an eigenvalue of $B'$ then we have
\begin{equation}\label{eqn:char-poly}
\frac{\det \left( B - tI \right)}{\det \left( B' - tI \right)} = t^{2d} \det \left( I + \frac{1}{t^2} \left( B' - tI \right)^{-1} X\right).
\end{equation}
\end{theorem}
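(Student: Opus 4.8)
The plan is to start from Equation~(\ref{eqn:b-prime-x}), which the discussion preceding the theorem already established for $t \neq 0$: namely $\det(B - tI) = t^{2d}\det\!\left(B' - tI + X/t^2\right)$. Since this reduces the whole statement to rewriting the inner determinant as a product, no further spectral input is required beyond Lemma~\ref{lem:de-ff} and the Schur-complement identity used to derive~(\ref{eqn:b-prime-x}).

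The key step is to factor $(B' - tI)$ out of the inner matrix. Because $t$ is not an eigenvalue of $B'$, the matrix $B' - tI$ is invertible, so I can write
\begin{equation}
B' - tI + \frac{X}{t^2} = (B' - tI)\left(I + \frac{1}{t^2}(B' - tI)^{-1}X\right).
\end{equation}
Taking determinants and using multiplicativity, $\det(MN) = \det(M)\det(N)$, turns the right-hand side of~(\ref{eqn:b-prime-x}) into $t^{2d}\det(B' - tI)\det\!\left(I + \frac{1}{t^2}(B' - tI)^{-1}X\right)$. Dividing through by $\det(B' - tI)$, which is nonzero exactly because $t \notin \operatorname{spec}(B')$, produces Equation~(\ref{eqn:char-poly}). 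That is the entire argument; the only substantive moves are the factoring and the bookkeeping of which matrices are invertible.

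The one point that needs care, and which I expect to be the main (if minor) obstacle, is the value $t = 0$. Equation~(\ref{eqn:b-prime-x}) was obtained only for $t \neq 0$, since the intermediate Schur step needs $F - tI$ invertible, whereas the theorem's hypothesis only excludes eigenvalues of $B'$. I would handle this either by restricting to $t \neq 0$, which already covers the intended regime where $t$ sits near the positive Perron root $\lambda_1$, or by noting that both sides of~(\ref{eqn:char-poly}) are rational functions of $t$ agreeing on the cofinite set where $t \neq 0$ and $t \notin \operatorname{spec}(B')$; agreeing there forces them to agree as rational functions, hence at every $t$ where both are defined.
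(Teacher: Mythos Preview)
Your argument is correct and is exactly the paper's proof: the paper simply says the result is ``immediate from (\ref{eqn:b-prime-x}) by factoring out $B' - tI$,'' which is precisely the factorization and multiplicativity step you wrote out. Your additional remarks on the $t=0$ case are more careful than the paper, which leaves that point implicit.
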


\begin{proof}
Immediate from (\ref{eqn:b-prime-x}) by factoring out $B' - tI$.
\end{proof}

If $c$ has degree $1$, then $X$ equals the zero matrix, and Equation~(\ref{eqn:char-poly}) simplifies to show that removing $c$ has no influence on the non-zero NB-eigenvalues. There are other nodes whose removal do not influence the non-zero NB-eigenvalues, which are characterized as follows. Let the \emph{$2$-core} of $G$ be the graph that remains after iteratively removing nodes of degree $1$. Let the \emph{$1$-shell} of $G$ be the graph induced by the nodes outside of the $2$-core.

\begin{corollary}\label{cor:degree-one}
If $c$ is in the $1$-shell of $G$, removing it does not change the non-zero NB-eigenvalues.
\end{corollary}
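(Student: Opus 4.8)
The plan is to reduce the statement to the degree-one case already settled in the paragraph preceding the corollary, by routing everything through the $2$-core. The central observation I would establish first, as a lemma, is that the non-zero NB-spectrum of any graph is determined entirely by its $2$-core. To see this, start from $G$ and iteratively delete a node of degree at most one. Each such deletion is harmless for the non-zero spectrum: deleting an isolated node carries no directed edges and so leaves $B$ unchanged, while deleting a degree-one node leaves the non-zero NB-eigenvalues fixed by the $X = 0$ argument together with Theorem~\ref{thm:char-poly}, since for $d = 1$ the identity collapses to $\det\left( B - tI \right) = t^{2} \det\left( B' - tI \right)$ (valid for all but finitely many $t$, hence as polynomials). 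Because this peeling terminates precisely at the $2$-core, an induction on the number of deletions shows that $G$ and its $2$-core have the same non-zero NB-spectrum.

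Next I would show that deleting a single $1$-shell node does not alter the $2$-core. The relevant fact is that the $2$-core is the unique maximal subgraph of minimum degree at least two, equivalently the union of all subgraphs of minimum degree at least two (this union again has minimum degree at least two). If $c$ lies in the $1$-shell, then $c$ is not a vertex of the $2$-core $C$ of $G$, so $C$ survives intact as a subgraph of $G \setminus \{c\}$ of minimum degree at least two and is therefore contained in the $2$-core of $G \setminus \{c\}$. Conversely, the $2$-core of $G \setminus \{c\}$ is a minimum-degree-at-least-two subgraph of $G$, hence contained in $C$. Thus $G$ and $G \setminus \{c\}$ share the same $2$-core.

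Combining the two steps gives the corollary directly: the non-zero NB-spectrum of $G$ equals that of its $2$-core, which equals that of the $2$-core of $G \setminus \{c\}$ by the previous paragraph, which in turn equals the non-zero NB-spectrum of $G \setminus \{c\}$ by the lemma applied to $G \setminus \{c\}$.

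I expect the genuine obstacle to be conceptual rather than computational: one cannot apply the degree-one result to $c$ directly, because a $1$-shell node may well have degree greater than one in $G$, so the argument is forced to pass through the $2$-core as an intermediary. The only delicate bookkeeping is in the first lemma, where I must confirm that the peeling can proceed node-by-node so that each individual step legitimately falls under the degree-one (or isolated-node) case, and that any isolated vertices created along the way are immaterial to $B$. The $2$-core invariance in the second step is essentially the standard uniqueness and monotonicity of the $2$-core and needs only the remark about unions of minimum-degree-at-least-two subgraphs.
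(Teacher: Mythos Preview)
Your proposal is correct and rests on the same underlying mechanism as the paper: the degree-one case gives $X=0$, hence $\det(B-tI)=t^{2}\det(B'-tI)$, and one propagates this through an iterative peeling of degree-$\leq 1$ nodes. The paper's proof simply peels nodes of degree one until $c$ itself has degree one, observes that each step preserves the non-zero spectrum, and concludes.

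The difference is structural rather than conceptual. The paper's argument, read literally, establishes that the non-zero spectrum of $G$ equals that of $G$ with the entire peeling sequence (including $c$) removed; it leaves implicit the step back to $G\setminus\{c\}$ alone. Your route makes this explicit by naming the $2$-core as the common invariant: you prove once that any graph shares its non-zero NB-spectrum with its $2$-core, and separately that deleting a $1$-shell vertex leaves the $2$-core unchanged (via the standard maximality characterization). This buys you a cleaner logical closure---the equality $\mathrm{spec}^{\ast}(G)=\mathrm{spec}^{\ast}(G\setminus\{c\})$ follows by sandwiching both through the common $2$-core---at the cost of one extra lemma. The paper's version is shorter but requires the reader to supply that same sandwich argument (or, equivalently, to note that the peeling sequence used on $G$ can also be run on $G\setminus\{c\}$, possibly with some nodes becoming isolated rather than degree-one along the way). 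Your handling of isolated vertices created during peeling is also a detail the paper does not mention.
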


\begin{proof}
If $c$ has degree $1$, Equation~(\ref{eqn:x}) gives $X=0$. In this case, (\ref{eqn:char-poly}) becomes $\det \left( B - tI \right) = t^{2d} \det \left( B' - tI \right)$, which implies the assertion. In general, if $c$ is in the $1$-shell, then it must have degree $1$ after iteratively removing some sequence of nodes each of which has degree $1$ at the time of removal. Each of these removals has no effect on the non-zero eigenvalues, and therefore neither does the removal of $c$.
\end{proof}

\begin{remark}
Intuitively, since $F$ is the NB-matrix of a star graph, which contains no NB-walks of length $3$ or more, then immediately we have $F^2 = 0$. Following Figure~\ref{fig:blue-yellow}, $F^2$ counts the number of NB-walks of length $3$ whose edges are yellow-yellow-yellow, of which there are none. Similarly, $DE$ counts the NB-walks whose edges are blue-yellow-blue, which also do not exist. Finally, $X=DFE$ counts the NB-walks of color blue-yellow-yellow-blue, which are precisely those that are destroyed when removing $c$. It is then no surprise that the rest of our analysis pivots fundamentally on the matrix $X$.
\end{remark}

\subsection{The largest eigenvalue}\label{sec:largest-eigenvalue}

We now pivot to study the eigen-drop induced by removing $c$. The larger this eigen-drop, the more influential the target node is in determining the epidemic or percolation thresholds.

\begin{theorem}
With the same assumptions as in Theorem~\ref{thm:char-poly}, let $\lambda_1$ be the largest eigenvalue of $B$ and let $\mathbf{w}$ be a vector such that in Equation~(\ref{eqn:b-prime-x}) we have
\begin{equation}\label{eqn:eigenvector}
\left( B' - \lambda_1 I + \frac{X}{\lambda_1^2} \right) \mathbf{w} = 0.
\end{equation}
Suppose $\{\mathbf{v}_i\}$ is a basis of right eigenvectors of $B'$ and write $\mathbf{w}$ in this basis, $\mathbf{w} = \sum_i w_i \mathbf{v}_i$. Let $\mathbf{u}_1$ be the left eigenvector of $B'$ corresponding to $\mathbf{v}_1$, and set $\alpha_i = \mathbf{u}_1^T X \mathbf{v}_i$. Finally, let $\lambda_1'$ be the largest eigenvalue of $B'$, so that the eigen-drop induced by $c$ is $\lambda_1 - \lambda_1'$ . Then, we have
\begin{align}\label{eqn:main-result}
\lambda_1 - \lambda_1' = \frac{1}{\lambda_1^2} \sum_{i} \frac{w_i}{w_1} \alpha_i.
\end{align}
\end{theorem}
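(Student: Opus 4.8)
The plan is to read the defining relation~(\ref{eqn:eigenvector}) as an exact (not merely first-order) eigenvalue equation and then reduce it to a scalar identity by projecting onto the leading left eigenvector of $B'$. First I would rewrite~(\ref{eqn:eigenvector}) as
\[
\left( B' + \frac{X}{\lambda_1^2} \right) \mathbf{w} = \lambda_1 \mathbf{w},
\]
so that $(\lambda_1, \mathbf{w})$ is an exact eigenpair of the matrix $B' + X/\lambda_1^2$. This is precisely the content of Theorem~\ref{thm:char-poly}: the nonzero roots $t$ of $\det(B - tI)$ are exactly those for which $t$ is an eigenvalue of $B' + X/t^2$, and $\lambda_1$ is the largest such root.

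Next I would expand $\mathbf{w}$ in the right-eigenbasis of $B'$. Writing $B'\mathbf{v}_i = \lambda_i'\mathbf{v}_i$ and $\mathbf{w} = \sum_i w_i \mathbf{v}_i$, substitution and collection of terms gives
\[
\sum_i w_i (\lambda_i' - \lambda_1)\, \mathbf{v}_i + \frac{1}{\lambda_1^2} \sum_i w_i\, X \mathbf{v}_i = 0.
\]
The key idea is to collapse the first sum to its $i = 1$ term using biorthogonality. Left-multiplying by $\mathbf{u}_1^T$ and invoking $\mathbf{u}_1^T \mathbf{v}_i = \delta_{1i}$ (with the normalization $\mathbf{u}_1^T\mathbf{v}_1 = 1$) annihilates every term of the first sum except $i=1$ and converts the second sum into $\sum_i w_i \alpha_i$, since $\alpha_i = \mathbf{u}_1^T X \mathbf{v}_i$. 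This leaves the scalar equation $w_1(\lambda_1' - \lambda_1) + \lambda_1^{-2}\sum_i w_i \alpha_i = 0$; dividing through by $w_1$ and rearranging yields~(\ref{eqn:main-result}).

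The main obstacle is justifying the biorthogonality relation $\mathbf{u}_1^T \mathbf{v}_i = \delta_{1i}$, since $B'$ is not symmetric and its spectrum may be complex. For a diagonalizable matrix, left and right eigenvectors attached to distinct eigenvalues are automatically orthogonal; because $\lambda_1'$ is the Perron--Frobenius eigenvalue of $B'$ it is simple, so $\mathbf{u}_1$ is orthogonal to every $\mathbf{v}_i$ with $\lambda_i' \neq \lambda_1'$, and we may rescale so that $\mathbf{u}_1^T \mathbf{v}_1 = 1$. Any degeneracy in the remaining spectrum is harmless provided $\{\mathbf{v}_i\}$ is chosen compatibly with the spectral decomposition, which is guaranteed by the standing hypothesis that $\{\mathbf{v}_i\}$ is a genuine eigenbasis (equivalently, that $B'$ is diagonalizable). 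The only other thing to verify is the mundane bookkeeping that the same indexing and normalization feed both $w_i$ and $\alpha_i$.
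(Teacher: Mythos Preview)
Your proposal is correct and follows essentially the same route as the paper: expand $\mathbf{w}$ in the right-eigenbasis of $B'$, left-multiply the defining relation by $\mathbf{u}_1^T$, use the biorthogonality coming from the simplicity of the Perron eigenvalue $\lambda_1'$, and rearrange. Your write-up is slightly more careful than the paper in justifying $\mathbf{u}_1^T\mathbf{v}_i = \delta_{1i}$ and in tying the existence of $\mathbf{w}$ back to Theorem~\ref{thm:char-poly}, but the argument is the same.
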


\begin{proof}
If $\mathbf{v}_i$ corresponds to the eigenvalue $\lambda_i'$, then (\ref{eqn:eigenvector}) gives
\begin{align}\label{eqn:bar-baz}
\sum_i w_i \left( B' - \lambda_1 I + \frac{X}{\lambda_1^2} \right) \mathbf{v}_i &= 
\sum_i w_i \left( \lambda_i' \mathbf{v}_i - \lambda_1 \mathbf{v}_i + \frac{X \mathbf{v}_i}{\lambda_1^2} \right) = 0.
\end{align}

Let $\mathbf{u}_1$ be the left eigenvector corresponding to $\mathbf{v}_1$ normalized such that $\mathbf{u}_1^T \mathbf{v}_1 = 1$. Recall that $\mathbf{u}_1$ is orthogonal to every right eigenvector corresponding to a different eigenvalue. Since $\lambda_1'$ has multiplicity one, we have $\mathbf{u}_1^T \mathbf{v}_i = 0$ for each $i \neq 1$. Multiply by $\mathbf{u}_1$ on the left to get

\begin{align}\label{eqn:true-diff}
w_1 \left(\lambda_1' - \lambda_1 \right) + \sum_i w_i \frac{\mathbf{u}_1^T X \mathbf{v}_i}{\lambda_1^2} = 0.
\end{align}

Define $\alpha_i = \mathbf{u}_1^T X \mathbf{v}_i$ and rearrange to get Equation~(\ref{eqn:main-result}).
\end{proof}

\begin{remark}
We can reverse our argument and interpret (\ref{eqn:main-result}) in terms of node addition rather than removal. Suppose the original graph does not contain $c$, and therefore its NB-matrix is $B'$. Then, the NB-matrix \emph{after adding node $c$} is given by (\ref{eqn:block-form}). All our arguments are valid in this setting, and (\ref{eqn:main-result}) then says that the new largest NB-eigenvalue is the solution to a third-degree polynomial, the coefficients of which depend on the full eigendecomposition of $B'$.
\end{remark}

\subsubsection{An approximation}\label{sec:first-approx}
Unfortunately, Equation~(\ref{eqn:main-result}) requires knowledge of all eigenvectors of $B'$. However, in our experience, the vector $\mathbf{w}$ is extremely closely aligned to $\mathbf{v}_1$ and therefore the coefficients $w_i / w_1 \ll 1$. In this case, all but one term in the right-hand side of Equation~(\ref{eqn:main-result}) can be neglected and we get
\begin{equation}\label{eqn:first-approx}
\lambda_1^2 \left( \lambda_1 - \lambda_1' \right) - \alpha_1 \approx 0.
\end{equation}

Here, the larger $\alpha_1$, the larger the eigen-drop $\lambda_1 - \lambda_1'$. Therefore, we study the significance of $\alpha_1$ next.

\begin{proposition}\label{lem:x-nb-centrality}
Let $\mathbf{u}_1, \mathbf{v}_1$ be the left and right eigenvectors of $B'$ normalized such that $\mathbf{u}_1^T \mathbf{v}_1 = 1$. Then we have
\begin{equation}\label{eqn:x-nb-centrality}
\alpha_1 = \mathbf{u}_1^T X \mathbf{v}_1 = \mathbf{v}_1^T P X \mathbf{v}_1 = \left( \sum_i a_{ci} \mathbf{v}_1^i \right)^2 - \sum_i a_{ci} \left( \mathbf{v}_1^i \right)^2,
\end{equation}
where $\mathbf{v}_1^i$ is the NB-centrality of node $i$ in the graph after removal (see Equation~(\ref{eqn:nb-centrality})). We call $\alpha_1$ the \emph{\mbox{$X$-non-backtracking centrality}}, or \emph{\Xnb{} centrality}, of $c$.
\end{proposition}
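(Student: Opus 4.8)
The plan is to verify the stated chain of equalities left to right. The first equality $\alpha_1 = \mathbf{u}_1^T X \mathbf{v}_1$ is simply the definition of $\alpha_1$ from the preceding theorem, so nothing is needed there. The remaining content splits into two essentially independent tasks: first, to replace the left eigenvector $\mathbf{u}_1$ by $P \mathbf{v}_1$, which yields the middle equality $\mathbf{u}_1^T X \mathbf{v}_1 = \mathbf{v}_1^T P X \mathbf{v}_1$; and second, to expand the quadratic form $\mathbf{v}_1^T P X \mathbf{v}_1$ explicitly using the formula for $X$ in Equation~(\ref{eqn:x}), giving the final closed form in terms of NB-centralities.

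For the middle equality I would invoke the PT-symmetry of the NB-matrix recorded in Section~\ref{sec:background}. The graph after removing $c$ is itself a simple graph, so its NB-matrix $B'$ also satisfies that $P B'$ is symmetric, where here $P$ denotes the direction-reversing involution restricted to the edges not incident to $c$. Symmetry of $P B'$ is equivalent to $P B' = (B')^T P$. Applying $P$ to the eigen-relation $B' \mathbf{v}_1 = \lambda_1' \mathbf{v}_1$ then gives $(B')^T (P \mathbf{v}_1) = \lambda_1' (P \mathbf{v}_1)$, so $P \mathbf{v}_1$ is a left eigenvector of $B'$ for the eigenvalue $\lambda_1'$. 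Because $\lambda_1'$ is simple by Perron--Frobenius, $\mathbf{u}_1$ and $P \mathbf{v}_1$ are parallel; the normalization $\mathbf{u}_1^T \mathbf{v}_1 = 1$ then pins down the scalar, and under the companion normalization $\mathbf{v}_1^T P \mathbf{v}_1 = 1$ (which is positive for the Perron vector, since $P$ merely permutes its non-negative entries, and may therefore be arranged by rescaling $\mathbf{v}_1$) we obtain $\mathbf{u}_1 = P \mathbf{v}_1$ exactly. Substituting this identity gives the middle equality.

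For the final equality I would compute $X \mathbf{v}_1$ directly from Equation~(\ref{eqn:x}). Summing $X_{k \to l,\, i \to j} = a_{ck} a_{cj}(1 - \delta_{kj})$ against $\mathbf{v}_1$ and collecting the directed edges by their head, the inner sum $\sum_{i} (\mathbf{v}_1)_{i \to j}$ over edges pointing into $j$ is precisely the NB-centrality $\mathbf{v}_1^j$ of Equation~(\ref{eqn:nb-centrality}). This yields $(X \mathbf{v}_1)_{k \to l} = a_{ck}(S - \mathbf{v}_1^k)$, where $S = \sum_i a_{ci} \mathbf{v}_1^i$ is the total NB-centrality of the neighbors of $c$; note that this expression depends only on the tail $k$. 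Applying $P$ reverses the edge, so $(P X \mathbf{v}_1)_{k \to l} = a_{cl}(S - \mathbf{v}_1^l)$, and forming $\mathbf{v}_1^T (P X \mathbf{v}_1)$ I would again group the edges $k \to l$ by their head $l$, recognizing $\sum_k (\mathbf{v}_1)_{k \to l} = \mathbf{v}_1^l$. After using the identity $\sum_l a_{cl} \mathbf{v}_1^l = S$ in the resulting collapse, the sum reduces to $S^2 - \sum_i a_{ci}(\mathbf{v}_1^i)^2$, which is the claimed expression.

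I expect the genuine care to lie in two places rather than in any deep argument. The first is the bookkeeping over directed edges in the last step: one must consistently read the two indices of $X$ as the tail of the row edge and the head of the column edge, and correctly identify which partial sums over $\overline{E}$ reconstitute the NB-centrality, since an off-by-one in the direction convention would spoil the identification $\sum_{k} (\mathbf{v}_1)_{k \to l} = \mathbf{v}_1^l$. The second, more conceptual point is the normalization in the middle step: the equality as stated presupposes $\mathbf{v}_1^T P \mathbf{v}_1 = 1$ alongside $\mathbf{u}_1^T \mathbf{v}_1 = 1$, so I would make explicit that the two normalizations are compatible precisely because $P \mathbf{v}_1$ is the left Perron vector. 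Everything else is routine substitution.
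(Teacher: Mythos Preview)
Your proposal is correct and follows essentially the same route as the paper: the middle equality is obtained via the PT-symmetry $\mathbf{u}_1 = P\mathbf{v}_1$ (the paper packages this as Lemma~\ref{lem:right-left}), and the final equality by direct expansion of the quadratic form using Equation~(\ref{eqn:x}) and the definition of NB-centrality. The only cosmetic difference is that the paper first writes down the entries of $PX$ and then expands, whereas you compute $X\mathbf{v}_1$, apply $P$, and then contract; your explicit remark about the compatibility of the normalizations $\mathbf{u}_1^T\mathbf{v}_1 = 1$ and $\mathbf{v}_1^T P \mathbf{v}_1 = 1$ is a point the paper leaves implicit.
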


\begin{proof}
The first equality comes from the fact that $\mathbf{u}_1 = P \mathbf{v}_1$, by Lemma~\ref{lem:right-left}. We can find $PX_{k \to l, i \to j} = a_{cl}a_{cj} \left( 1 - \delta_{lj} \right)$ using Equations (\ref{eqn:x}) and (\ref{eqn:p}). The result then follows from manually computing $\mathbf{v}_1^T PX \mathbf{v}_1$ and applying Equation~(\ref{eqn:nb-centrality}).
\end{proof}

The Proposition establishes that the behavior of the eigen-drop in (\ref{eqn:first-approx}) is governed by the \Xnb{} centrality of $c$ in (\ref{eqn:x-nb-centrality}), which is a function only of the NB-centralities of $c$'s neighbors. Importantly, these centralities are measured \emph{after} $c$ is removed. We come back to this point in Section~\ref{sec:immunization}. Notably, the principal eigenvector is normalized by $\mathbf{u}_1^T \mathbf{v}_1 = \mathbf{v}_1^T P \mathbf{v}_1 = 1$, i.e. it does not have unit length.

\subsubsection{An upper bound}\label{sec:upper-bound}
An alternative way of studying the eigen-drop is by choosing $\mathbf{w}$ such that $w_1 = 1$, and bounding
\begin{equation}
q = q(c) = \sum_i w_i \alpha_i,
\end{equation}
which drives the right-hand side of Equation~(\ref{eqn:main-result}).

Suppose that $R$ is the matrix whose columns are the eigenvectors $\{\mathbf{v}_i\}$, and let $L = R^{-1}$ such that $B' = R \Lambda L$, where $\Lambda$ is the diagonal matrix of the eigenvalues $\{\lambda_i'\}$. The rows of $L$ are left eigenvectors of $B'$, in particular, $\mathbf{u}_1^T$ is the first row of $L$. Then we have $\alpha_i = \left( L X R \right)_{1i}$, and $q$ is the dot product between the first row of $LXR$ and $\mathbf{w}$,
\begin{equation}
q = \mathbf{e_1^T} L X R \mathbf{w} = \Tr \left( L X R \, \mathbf{w} \mathbf{e_1^T} \right),
\end{equation}
where $\mathbf{e_1} = \left( 1, 0, \ldots, 0 \right)$. Using the cyclic property of the trace, and the fact that $P^2 = I$, we now have
\begin{align}
q &= \Tr \left( L X R \, \mathbf{w} \mathbf{e_1^T} \right) = \Tr \left( X R \, \mathbf{w} \mathbf{e_1^T} \, L \right) = \Tr \left( P X R \, \mathbf{w} \mathbf{e_1^T} \, L P \right).
\end{align}
Applying the Cauchy-Schwarz inequality for the trace gives us
\begin{align}
q &\leq \left| PX \right|_F \, \left| R \, \mathbf{w} \mathbf{e_1^T} \, L P \right|_F,
\end{align}
where $\left| M \right|_F^2 = \Tr\left( M^T M \right)$ is the Frobenius norm. Finally, the fact that $\mathbf{w} \mathbf{e_1^T}$ is a matrix with rank one gives 
\begin{align}
q &\leq \left| PX \right|_F \, \left( \mathbf{e_1^T} L P R \mathbf{w} \right).
\end{align}
As before, we have $w_i / w_1 \ll 1$ and since we chose $w_1 = 1$, the term $\left( \mathbf{e_1^T} L P R \mathbf{w} \right)$ is very close to $1$. Therefore, we obtain $|PX|_F$ as an (approximate) upper bound for $q$. Observe that since $PX$ is non-negative, we have $|PX|_F = \one^T PX \one$, where $\one = \left( 1, 1, \ldots, 1\right)$.

\begin{proposition}\label{lem:x-degree-centrality}
In Equation~(\ref{eqn:main-result}), let $\mathbf{w}$ be such that $w_1 = 1$, and define $q = \sum_i w_i \alpha_i$. The quantity $\one^T PX \one$ is an approximate upper bound for $q$, that is, $q \leq \one^T PX \one \left( \mathbf{e_1^T} L P R \mathbf{w} \right)$. Furthermore, we have
\begin{equation}\label{eqn:x-degree-centrality-minus-one}
\one^T PX \one = \left( \sum_i a_{ci} \left(d_i - 1 \right) \right)^2 - \sum_i a_{ci} \left(d_i - 1\right)^2,
\end{equation}
where $d_i$ is the degree of node $i$ in the original graph, before removal.  We call $\one^T PX \one$ the \emph{\Xdeg{} centrality} of $c$.
\end{proposition}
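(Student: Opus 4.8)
The plan is to establish the formula in Equation~(\ref{eqn:x-degree-centrality-minus-one}) by directly computing $\one^T PX \one$ from the explicit matrix entries already available in the excerpt. The preceding discussion has done all the conceptual work—identifying $\one^T PX \one$ as the approximate upper bound via Cauchy--Schwarz and the rank-one observation—so the remaining content of this Proposition is the closed-form evaluation. The inequality part, namely $q \leq \one^T PX \one \left( \mathbf{e_1^T} L P R \mathbf{w} \right)$, is already fully derived in the text above the statement, so in the proof I would simply cite that derivation. The substantive task is to verify the algebraic identity.

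First I would recall the entry formula for $PX$ obtained in the proof of Proposition~\ref{lem:x-nb-centrality}, namely $PX_{k \to l, i \to j} = a_{cl}a_{cj}\left(1 - \delta_{lj}\right)$. Since $\one^T PX \one = \sum_{k \to l}\sum_{i \to j} PX_{k \to l, i \to j}$, I would substitute this entry and observe that the sum factors: the $k \to l$ index contributes only through $a_{cl}$ (and the constraint via $\delta_{lj}$), and the $i \to j$ index contributes only through $a_{cj}$. The key bookkeeping step is to count, for each fixed target pair of terminal nodes $l$ and $j$ with $a_{cl} = a_{cj} = 1$, how many directed edges $k \to l$ terminate at $l$ and how many $i \to j$ terminate at $j$. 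A directed edge terminating at a node of degree $d_l$ can originate from any of its $d_l$ neighbors, giving $d_l$ choices; however I must be careful about which edges actually appear as row/column indices of $X$ (equivalently of $PX$)—these are the ``blue'' edges not incident to $c$, per Equation~(\ref{eqn:block-form}). This restriction is what converts the naive factor $d_l$ into the factor $\left(d_l - 1\right)$, since exactly one neighbor of $l$ (namely $c$ itself, when $l$ is adjacent to $c$) is excluded.

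Carrying this out, the double sum splits into a ``diagonal-free'' bilinear form: $\sum_{l} a_{cl}\left(d_l - 1\right) \sum_{j} a_{cj}\left(d_j - 1\right)$ minus the diagonal correction from the $\delta_{lj}$ term, which removes the $l = j$ contributions $\sum_{l} a_{cl}\left(d_l - 1\right)^2$. This yields exactly Equation~(\ref{eqn:x-degree-centrality-minus-one}). Renaming the summation index to $i$ matches the stated form. I would also note the interpretive point that, in contrast to the \Xnb{} centrality of Proposition~\ref{lem:x-nb-centrality} which uses post-removal NB-centralities, here the degrees $d_i$ are those of the \emph{original} graph before removal, which is consistent with the entries of $X$ being defined on the full graph $G$.

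The main obstacle, such as it is, will be the index bookkeeping in the middle step: correctly identifying that the admissible directed edges $k \to l$ are those not incident to $c$, and confirming that this accounts precisely for the $-1$ shift in the degree (so that the formula reads $d_i - 1$ and not $d_i$). I expect no deeper difficulty, since the identity is a finite combinatorial count once the entry formula for $PX$ and the index set of $X$ are pinned down; the $\delta_{lj}$ term is what produces the subtracted square, exactly mirroring the structure already seen in Equation~(\ref{eqn:x-nb-centrality}).
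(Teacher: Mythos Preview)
Your proposal is correct and follows essentially the same approach as the paper: defer the inequality to the preceding paragraphs, then evaluate $\one^T PX \one$ directly from the entry formula $PX_{k \to l, i \to j} = a_{cl}a_{cj}(1-\delta_{lj})$, with the key bookkeeping that only blue edges (not incident to $c$) index the rows and columns, so each neighbor $l$ of $c$ contributes $d_l - 1$ incoming blue edges. The paper phrases this last point as ``the degrees are measured after removal'' rather than your ``original degree minus one because $c$ is excluded,'' but these are the same count.
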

\begin{proof}
The first claim was proved in the previous paragraphs. The second claim comes from direct evaluation of $\one^T PX \one$ using $PX_{k \to l, i \to j} = a_{cl}a_{cj} \left( 1 - \delta_{lj} \right)$, keeping in mind the degrees are measured after removal.
\end{proof}

\begin{remark}
The fact that the \Xdeg{} of $c$, $\one^T PX \one$, bounds $q$ only approximately merits further theoretical consideration. However, it will be immaterial in our exposition going forward, as our experiments will show that, in practice, the \Xdeg{} of nodes is an excellent predictor of the node's eigen-gap, regardless of the value of $\mathbf{e_1^T} L P R \mathbf{w}$.
\end{remark}

\subsection{X-centrality}\label{sec:x-centrality}
In Section~\ref{sec:first-approx} we use the \Xnb{} centrality, $\mathbf{v}_1^T PX \mathbf{v}_1$, while in Section~\ref{sec:upper-bound} we use the \Xdeg{} centrality, $\mathbf{1}^T PX \mathbf{1}$, both for the purpose of studying the eigen-drop induced by $c$. The former is a function of the NB-centralities of the neighbors of $c$ (Proposition~\ref{lem:x-nb-centrality}), while the latter is a function of their degrees (Proposition~\ref{lem:x-degree-centrality}). Importantly, both centralities are measured \emph{after} $c$ has been removed.

Consider a fixed target node $c$, which in turn fixes $X$ and $P$. The matrix $PX$ is capable of defining new node-level statistics given a vector of values for each directed edge. It does so by aggregating the edge values along NB-walks that go through $c$; following Figure~\ref{fig:blue-yellow}, this aggregation is done along blue-yellow-yellow-blue walks. Recall that if $G$ has $m$ (undirected) edges and $c$ has degree $d$, then $X$ and $P$ are of size $2m - 2d$. Given an arbitrary vector $\mathbf{z}$ of size $2m - 2d$, we have
\begin{equation}\label{eqn:quadratic-form}
\mathbf{z}^T PX \mathbf{z} = \left( \sum_i a_{ci} \sum_j \mathbf{z}_{j \to i} \right)^2 - \sum_i a_{ci} \left( \sum_j \mathbf{z}_{j \to i}\right)^2.
\end{equation}

One can evaluate the right-hand side of (\ref{eqn:quadratic-form}) for any vector $\mathbf{z}$ of size $2m$, and use only the $2m - 2d$ entries that correspond to edges not incident to $c$. In other words, we do not need to know $X$ or $P$, but only who the neighbors of $c$ are. Since $c$ determines both $X$ and $P$, the same vector $\mathbf{z}$ can be evaluated using different target nodes. Therefore the quantity in (\ref{eqn:quadratic-form}) naturally corresponds to whichever target node was used to evaluate it, and can be thought of as a node-level quantity derived from $\mathbf{z}$.

Now define $\mathbf{z}^i = \sum_j \mathbf{z}_{j \to i}$ and let $\Var_c \left( \mathbf{z}^i \right)$ be the variance of the $\mathbf{z}^i$ values corresponding to neighbors of $c$. Then we have
\begin{equation}
\Var_c \left( \mathbf{z}^i \right) = \frac{\sum_i a_{ci} \left( \mathbf{z}^i \right)^2}{d} - \left( \frac{\sum_i a_{ci} \mathbf{z}^i}{d} \right)^2,
\end{equation}
which differs from (\ref{eqn:quadratic-form}) only in sign and a (non-linear) normalization. Accordingly, $\mathbf{z}^T PX \mathbf{z}$ will have large values when $\mathbf{z}^i$ has little variability among the neighbors of $c$.

Using this framework we could define, for example, \emph{\mbox{$X$-closeness} centrality}, \emph{\mbox{$X$-betweenness} centrality}, etc. Whether these concepts are as useful as the two studied here remains an open question.

\section{Node immunization}\label{sec:immunization}
Targeted immunization works as follows. Given a graph $G$ and an integer $p$, we want to remove from $G$ the $p$ nodes that increase the epidemic threshold the most (equivalently, decrease the largest NB-eigenvalue the most). Common strategies involve three steps: (i) the nodes are sorted by decreasing values of a certain statistic, for example degree; (ii) the node with the highest value of this statistic is removed from the graph; and (iii) the statistic has to be recomputed after each time a node is removed. These steps are repeated until the target number $p$ has been removed. In this context, our framework presents two major obstacles:
\begin{enumerate}[label=\alph*)]
\item Both the \Xnb{} and \Xdeg{} centralities of a node must be computed \emph{after} the node has been removed. So, to execute the step (i) above, we need to temporarily remove each node in turn before we decide which one to ultimately remove, which defeats the purpose of targeted immunization.
\item For step (iii), we must guarantee that recomputing the statistic of every node at each step is an efficient procedure.
\end{enumerate}

\subsection{Using X-NB centrality}\label{sec:using-x-nb}
Algorithm~\ref{alg:naive-xnb} naively follows the steps above to implement an immunization strategy based on \Xnb{} centrality. We are tempted to think this strategy is the ``right'' one, as it approximates the true effect of a node's removal in the epidemic threshold. However, we must address the obstacles mentioned above.

\begin{algorithm}
\caption{Naive \Xnb{} immunization strategy.}
\label{alg:naive-xnb}
\SetKwFunction{RemoveNode}{RemoveNode}
\SetKwFunction{NBCentrality}{NBCentrality}
\SetKwFunction{XNBCentrality}{XNBCentrality}
\SetKwFunction{length}{length}
\SetKwFunction{AuxNBMatrix}{AuxNBMatrix}
\SetKwData{removed}{removed}
\SetKwData{node}{node}
\SetKwData{XNB}{XNB}
\SetKwData{XNBi}{XNB[i]}
\SetKwData{XNBc}{XNB[c]}
\SetKwData{i}{i}
\SetKwData{c}{c}
\SetAlgoNoEnd
\DontPrintSemicolon
\KwIn{graph $G$, integer $p$}
\KwOut{\removed, an ordered list of nodes to immunize}
\BlankLine
\removed $\leftarrow \emptyset$ \;
\XNBi $\leftarrow 0$ for each node \i \;
\While{\length{\removed} $< p$}{
  \ForEach{node \c in $G$}{
    $H \leftarrow$ \RemoveNode{$G$, \c} \;
    $v_H \leftarrow$ principal eigenvector of \AuxNBMatrix{$H$} \;
    \XNBc $\leftarrow$ \XNBCentrality{$v_H$, \c} \;
  }
  \node $\leftarrow \argmax_i$ \XNBi \;
  $G \leftarrow$ \RemoveNode{$G$, \node} \;
  \removed.append(\node) \;
}
\Return{\removed}
\end{algorithm}

\begin{algorithm}
\caption{Approximate \Xnb{} immunization strategy.}
\label{alg:approx-xnb}
\SetKwFunction{RemoveNode}{RemoveNode}
\SetKwFunction{NBCentrality}{NBCentrality}
\SetKwFunction{XNBCentrality}{XNBCentrality}
\SetKwFunction{length}{length}
\SetKwData{removed}{removed}
\SetKwData{node}{node}
\SetKwData{XNB}{XNB}
\SetKwFunction{AuxNBMatrix}{AuxNBMatrix}
\SetKwData{XNBi}{XNB[i]}
\SetKwData{XNBc}{XNB[c]}
\SetKwData{i}{i}
\SetKwData{c}{c}
\SetAlgoNoEnd
\DontPrintSemicolon
\KwIn{graph $G$, integer $p$}
\KwOut{\removed, an ordered list of nodes to immunize}
\BlankLine
\removed $\leftarrow \emptyset$ \;
\XNBi $\leftarrow 0$ for each node \i \;
\While{\length{\removed} $< p$}{
  $v_G \leftarrow$ principal eigenvector of \AuxNBMatrix{$G$} \;
  \ForEach{node \c in $G$}{
    \XNBc $\leftarrow$ \XNBCentrality{$v_G$, \c} \;
  }
  \node $\leftarrow \argmax_i$ \XNBi \;
  $G \leftarrow$ \RemoveNode{$G$, \node} \;
  \removed.append(\node) \;
}
\Return{\removed}
\end{algorithm}

To overcome obstacle (a), we propose to approximate Equation~(\ref{eqn:x-nb-centrality}) by using the NB-centralities in the original graph before removing any node even temporarily. Algorithm~\ref{alg:approx-xnb} takes this approximation into account. The error incurred by this approximation is dampened by the fact that what we are ultimately interested in is the ranking of the nodes rather than the actual values of their centralities. For obstacle (b), one could use a strategy similar to \cite{ChenTPTEFC16}, where they devise an algorithm to approximate the impact on a node's eigenvector centrality after the removal of a node without having to recompute the values again. However, doing so for $X$-NB centrality remains an open question.

\subsubsection*{Complexity Analysis}\label{sec:complexity-1}
We assume that $G$ is given in adjacency list format. In Algorithm~\ref{alg:naive-xnb}, lines $2$ and $8$ take $n$ operations each. Line $5$ creates a copy $H$ of the adjacency list and removes the target node $c$ from it (but leaves $G$ intact). Line $6$ uses Equation~(\ref{eqn:aux-nb-matrix}) to compute the auxiliary NB-matrix, which takes $O(m)$ time, and it takes $O\left( m \right)$ to compute the principal eigenvector (using, e.g. the Lanczos algorithm with a number of iterations that does not depend on the parameters). Line $7$ uses Lemma~\ref{lem:small-big} to compute the correctly normalized NB-centralities, and Equation~(\ref{eqn:x-nb-centrality}) to compute \Xnb{} centralities, both of which take $n$ operations. The remaining lines take constant time. Accounting for loops, Algorithm~\ref{alg:naive-xnb} takes a total of $O\left(n + p \left( n \left( m + n \right) + n \right) \right) = O\left( pn \left( m + n \right) \right)$. In Algorithm~\ref{alg:approx-xnb}, the NB-centralities are computed outside of the inner loop, which gives a complexity of $O\left( p \left( m + n \right) \right)$, or $O\left( m + n \right)$ for constant $p$.

\subsection{Using X-degree}\label{sec:using-x-deg}
\Xdeg{} can be easily computed without temporarily removing any nodes, see Equation~(\ref{eqn:x-degree-centrality-minus-one}). Indeed, all we need to know about the graph after removal is the degree of each node. Hence, obstacle (a) is easily overcome in this case. Further, after each step we need not recompute the \Xdeg{} of all nodes, but only of those nodes two steps away from the target node. Indeed, removing $c$ changes the degree of its neighbors, which in turn changes the \Xdeg{} of its neighbors' neighbors. So obstacle (b) is also overcome. Algorithm~\ref{alg:xdeg} implements this strategy. Importantly, it does not involve the computation of any matrices or their eigenvectors.

\subsubsection*{Complexity Analysis}\label{sec:complexity-2}
Lines $1,6,10,11$ of Algorithm~\ref{alg:xdeg} take constant time, while line $2$ takes $O(m)$. When using a standard map (or dictionary) to store the \Xdeg{} values, line $4$ takes $O(n)$ operations, and line $9$ takes $O(1)$. Now suppose that the nodes removed by Algorithm~\ref{alg:xdeg} are, in order, $i_1,\ldots,i_p$. At iteration $j$, the loop in line $5$ takes $d_{i_j}$ operations, and the double loop in lines $7-8$ takes as many iterations as the number of nodes two steps away from $i_j$, say $D_{i_j}$. This yields a total of $O \left( m + p n + \sum_{j=1}^p d_{i_j} + \sum_{j=1}^p D_{i_j} \right)$. We can also implement Algorithm~\ref{alg:xdeg} using an indexed priority queue (IPQ) to store the \Xdeg{} values instead of a map; see Appendix~\ref{app:complexity}. In this case the worst case scenario complexity is $O \left( m + p \log n + \sum_{j=1}^p d_{i_j} + \log n \sum_{j=1}^p D_{i_j} \right)$. In Appendix~\ref{app:complexity} we refine this analysis for networks with homogeneous or heterogeneous degree distributions, and show that the map or IPQ versions have better worst case scenario scalability for different values of network parameters.

Importantly, the average runtime of both versions is in fact close to linear, with the IPQ version being the fastest. Figure~\ref{fig:scaling} shows the average runtime of both versions on random power-law configuration model graphs with varying degree exponent $\gamma$ and constant $p$ (see Appendix~\ref{app:complexity} for details). The reason the average runtime is considerably faster than the worst-case scenario is because graphs typically have very few large hubs. That is, roughly speaking, there are $O(1)$ many nodes that take $O(n)$ time to process, while there are $O(n)$ many nodes that take $O(1)$ time to process. This effect is intensified the closer $\gamma$ is to $2$, which counterbalances the exponent $\frac{2}{\gamma - 1}$ in the worst case scenario.

\begin{algorithm}
\caption{\Xdeg{} immunization strategy.}
\label{alg:xdeg}
\SetKwFunction{RemoveNode}{RemoveNode}
\SetKwFunction{XDegree}{XDegree}
\SetKwFunction{length}{length}
\SetKwData{removed}{removed}
\SetKwData{node}{node}
\SetKwData{neighbors}{neighbors}
\SetKwData{neighborsnode}{neighbors[node]}
\SetKwData{neighborsi}{neighbors[i]}
\SetKwFunction{AuxNBMatrix}{AuxNBMatrix}
\SetKwFunction{heapify}{heapify}
\SetKwData{Xdeg}{XDeg}
\SetKwData{Xdegi}{XDeg[i]}
\SetKwData{Xdegj}{XDeg[j]}
\SetKwData{Xdegc}{XDeg[c]}
\SetKwData{queue}{queue}
\SetKwData{i}{i}
\SetKwData{j}{j}
\SetKwData{c}{c}
\SetAlgoNoEnd
\DontPrintSemicolon
\KwIn{graph $G$, integer $p$}
\KwOut{\removed, an ordered list of nodes to immunize}
\BlankLine
\removed $\leftarrow \emptyset$ \;
\Xdegi $\leftarrow$ \XDegree{$G$, \i} for each node \i \;
\While{\length{\removed} $< p$}{
  \node $\leftarrow \max_i $ \Xdegi \;
  \ForEach{\i in $G$.\neighborsnode}{
    $G$.\neighborsi.remove(\node) \;
  }
  \ForEach{\i in $G$.\neighborsnode}{
    \ForEach{\j in $G$.\neighborsi}{
      \Xdegj $\leftarrow$ \XDegree{$G$, \j}
    }
  }
  $G$.\neighborsnode $\leftarrow \emptyset$ \;
  \removed.append(\node) \;
}
\Return{\removed}
\end{algorithm}

\begin{figure}
    \centering
    \includegraphics[width=0.75\columnwidth]{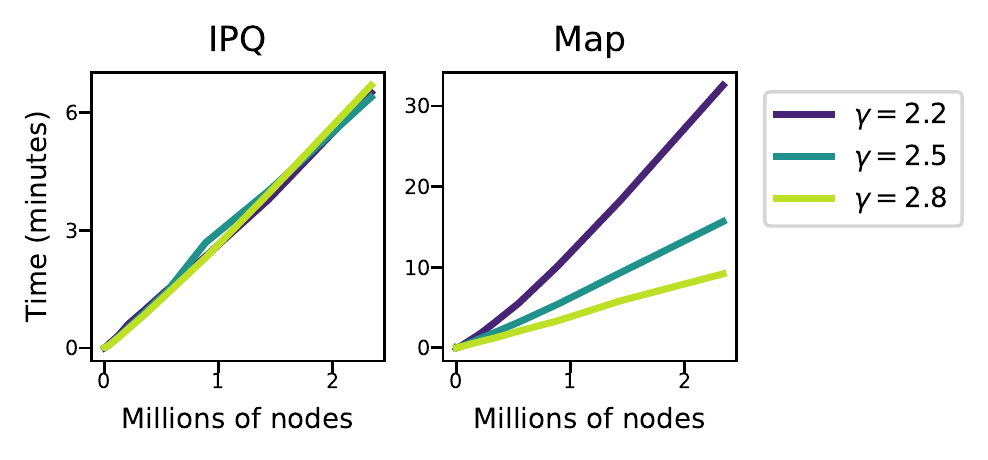}    
    \caption{Average runtime scaling of Algorithm~\ref{alg:xdeg} on random power-law graphs with varying degree exponent $\gamma$. The runtimes are linear, with IPQ being faster than Map.}  %Recall that IPQ stands for indexed priority queue and map is the standard map (i.e. hash table, or dictionary). }
    \label{fig:scaling}
\end{figure}

%%%%%%%%%%%%%%%%%%%%%%%%%%%%%%%%%%%%%%%%
%%% Related work
%%%%%%%%%%%%%%%%%%%%%%%%%%%%%%%%%%%%%%%%

\section{Related work}\label{sec:related-work}
\paragraph{Perturbation of NB-matrix} \citet{zhang2014non} briefly treats the case of eigenvalue perturbation of a matrix derived from the NB-matrix in the case of edge removal, while \citet{abs-1907-05603} analyze the perturbation of quadratic eigenvalue problems, with applications to the NB-eigenvalues of the stochastic block model. Our theory is more general since it studies node removal (as opposed to single edge removal), and it applies to any arbitrary graph.

\paragraph{NB centrality} Many notions of centrality based on the NB-matrix exist, for example NB-PageRank \citep{ArrigoHN19}, NB-centrality \citep{martin2014localization,radicchi2016leveraging}, and Collective Influence \citep{morone2015influence,morone2016collective}. The latter two have been proposed as solutions to the problem of ``influencer identification''. This problem aims to find nodes that determine the course of spreading dynamics, and is thus more general than our objective of increasing the epidemic threshold. Collective Influence in particular is similar to \Xdeg{}; see Appendix~\ref{app:ci}. Also in this context, \citet{kitsak2010identification} propose to use the $k$-core index, and \citet{abs-1912-08459} highlight the importance of node degree. We compare our algorithms to all of these baselines in Section~\ref{sec:experiments}. Finally, \citet{EverettB10} study the influence of a node's removal in other nodes' centrality, which is reminiscent to our $X$-centrality framework.

\paragraph{Targeted immunization} \citet{pastor2015epidemic} review general immunization strategies and other generalities of spreading dynamics on networks. \citet{ChenTPTEFC16} propose \texttt{NetShield}, an efficient algorithm for immunization focusing on decreasing the largest eigenvalue of the adjacency matrix. We prefer to focus on decreasing the largest NB-eigenvalue instead since it provides a tighter bound to the true epidemic threshold in certain cases \citep{karrer2014percolation,hamilton2014tight,shrestha2015message}. \citet{LinCZ17} study the percolation threshold in terms of so-called high-order non-backtracking matrices. Percolation thresholds are tightly related to epidemic thresholds of SIR dynamics \citep{pastor2015epidemic,newman2002spread}.

%%%%%%%%%%%%%%%%%%%%%%%%%%%%%%%%%%%%%%%%
%%% Experiments
%%%%%%%%%%%%%%%%%%%%%%%%%%%%%%%%%%%%%%%%

\section{Experiments}\label{sec:experiments}
%With the non-anonymous version of this work, we will release code for computing \Xdeg{}, \Xnb{} centrality and related algorithms.

\subsection{Approximating the Eigenvalue}\label{sec:exp-theoretical-accuracy}
\textbf{How close is the approximation in Equation~(\ref{eqn:first-approx})?} We first compute the largest NB-eigenvalue $\lambda_1$ of a graph $G$. Then we fix a target node $c$ and remove it from $G$ and compute the new eigenvalue $\lambda_c$. (For ease of notation, in this section we use $\lambda_c$ instead of $\lambda_1'$, and $\alpha$ instead of $\alpha_1$.) Finally, we use (\ref{eqn:first-approx}) to compute two approximations,
\begin{equation}
\widehat{\lambda}_c = \lambda_1 - \alpha / \lambda_1^2, \quad\quad
\widetilde{\lambda}_c = \lambda_1 - \widetilde{\alpha} / \lambda_1^2,
\end{equation}
where $\alpha$ is the true \Xnb{} centrality of $c$, and $\widetilde{\alpha}$ is the approximate \Xnb{} centrality used in Algorithm~\ref{alg:approx-xnb}, i.e., it is computed using the NB-centralities before removing $c$. We now compare the approximations $\widehat{\lambda}_c$ and $\widetilde{\lambda}_c$ to the true value of $\lambda_c$ for randomly selected nodes of synthetic graphs. We use different synthetic random graph models: Watts-Strogatz (WS) \citep{watts1998collective}, Stochastic Block Model (SBM) \citep{girvan2002community,karrer2011stochastic}, Barab\'asi-Albert (BA) \citep{albert2002statistical}, and %Hyperbolic graphs (HG) \cite{krioukov2010hyperbolic,aldecoa2015hyperbolic}
Block Two--Level Erd\H{o}s-R\'{e}nyi (BTER) \citep{seshadhri2012community}. See Section~\ref{app:data} for more details on the data sets, and  Section~\ref{app:approximating-the-eigenvalue} for details on the experimental setup.

\begin{figure*}
    \centering
\makebox[\textwidth][c]{\includegraphics[width=1.2\textwidth]{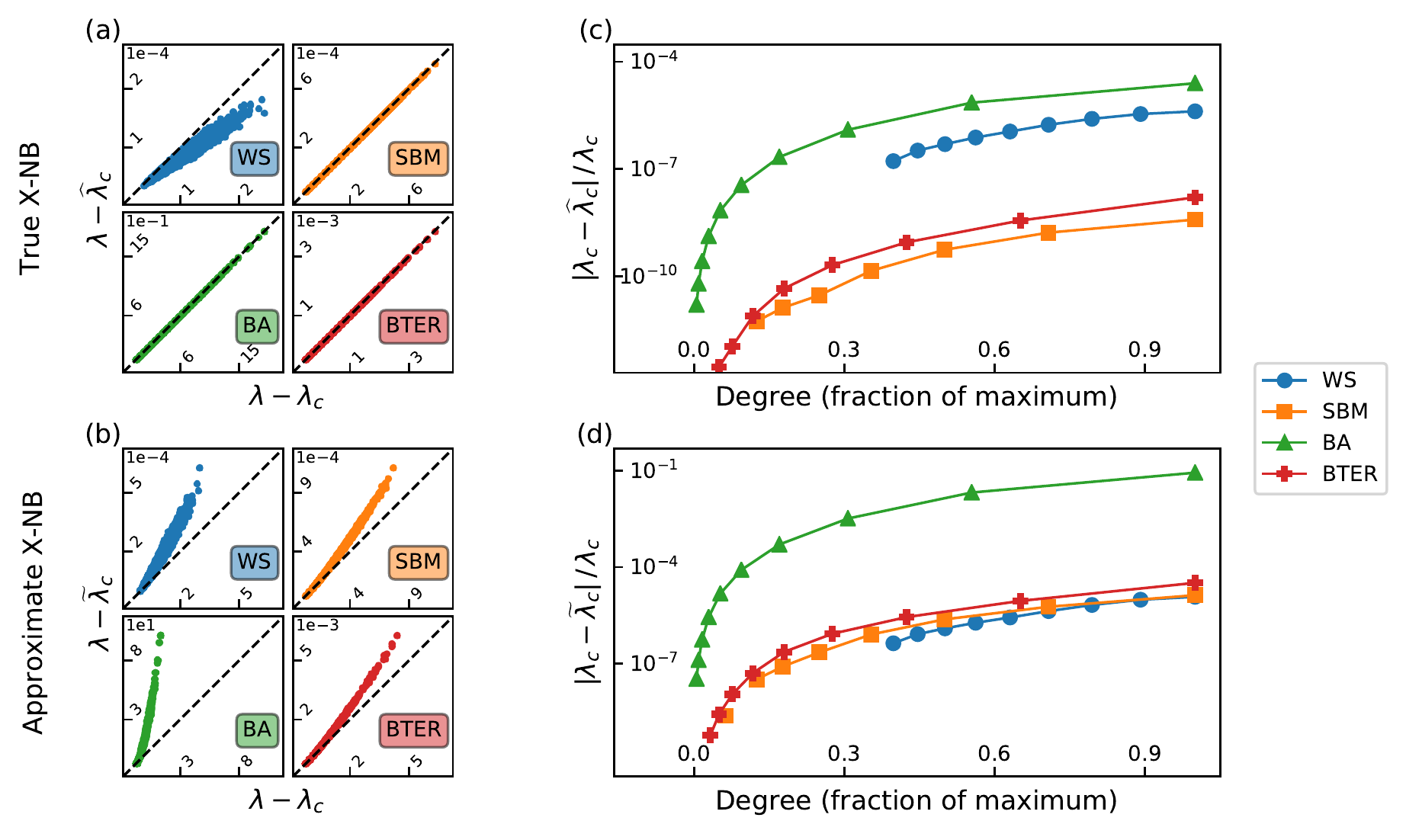}}
    \caption{\emph{Left:} True eigen-drop (vertical axis) vs approx. eigen-drop (horizontal axis), using \emph{(a)} true, and \emph{(b)} approx. values of \Xnb{}. Dashed line is $y = x$. Each marker represents one node. \emph{Right:} Relative error when predicting $\lambda_c$, as a function of degree, using \emph{(c)} true, and \emph{(d)} approx. values of \Xnb{}. Degrees expressed as a fraction of the maximum degree among graphs in the same ensemble. Each marker is the average within log-binned values of degree; error bars too small to show at this scale. WS graphs (blue circles) have no nodes whose degree is less than 30\% of the maximum. Our approximation of the eigen-gap is accurate.}
    \label{fig:th-acc}
\end{figure*}

Fig.~\ref{fig:th-acc}a shows that our approximation is extremely close for all graphs tested, though it tends to underestimate the eigen-drop in WS graphs. Fig.~\ref{fig:th-acc}c shows the average relative error versus degree. Our approximation worsens as degree increases, though it is quite small for most degrees. In the worst case, the relative error is less than $10^{-4}$, or $0.01\%$. Fig.~\ref{fig:th-acc}b shows the eigen-drop computed using the approximate version of \Xnb{}. This approximation is systematically overestimating the true eigen-drop. Fig.~\ref{fig:th-acc}d shows that this systematic error is of the order of $10\%$ in the worst case, though it is negligible for small degrees. %As mentioned before, our framework can be reversed and interpreted in terms of node addition instead of removal. In that case, there is no need to approximate $\alpha$ since we can apply Equation~(\ref{eqn:first-approx}) directly, and only the top row of Fig.~\ref{fig:th-acc} is relevant.
In all, Figure~\ref{fig:th-acc} confirms the accuracy of our approximations, and it points to the fact that the terms neglected in (\ref{eqn:first-approx}) will become larger as degree increases.

\subsection{Predicting the Eigen-drop}\label{sec:exp-predicting-eigengap}
\textbf{How well can \Xnb{} centrality and \Xdeg{} predict a node's eigen-drop?} Unlike in Experiment \ref{sec:exp-theoretical-accuracy}, here we do not approximate the eigen-drop, but only seek to predict its size. (In fact, we cannot use \Xdeg{} to approximate the eigen-drop at all.) See Section~\ref{app:predicting-the-eigengap} for experimental setup, and Section~\ref{app:data} for details on data sets.

\begin{figure*}
    \centering
    \makebox[\textwidth][c]{\includegraphics[width=1.25\textwidth]{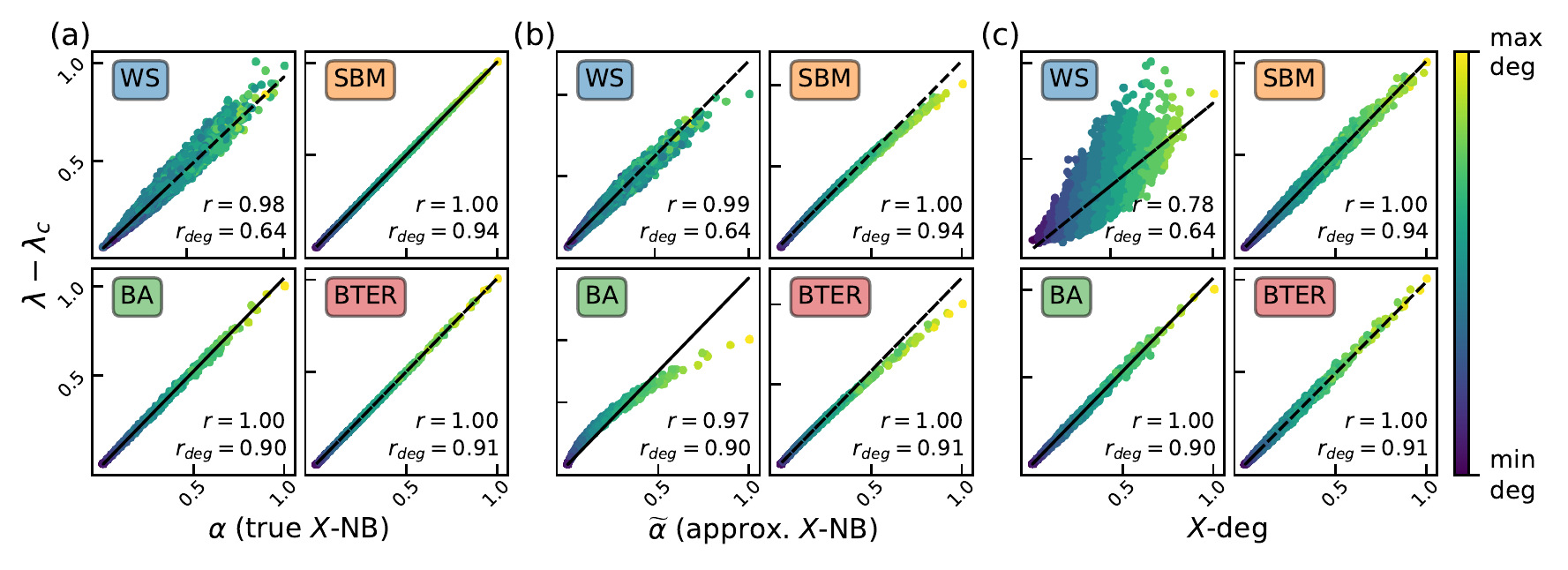}}
    \caption{Predicting the eigen-drop using \emph{(a)} true value of \Xnb{}, \emph{(b)} approx. value of \Xnb{}, and \emph{(c)} \Xdeg{}. Markers colored by degree, expressed as a fraction of the largest degree in the same ensemble. Each panel shows the correlation coefficient between the corresponding statistic and the eigen-drop ($r$), and the correlation between degree and the eigen-drop ($r_{deg}$). Dashed lines are linear regression lines. Our proposed node-level statistics accurately track eigen-drops in random graph models such as BA, SBM, and BTER. \Xdeg{}~underestimates the eigen-drop in WS graphs. }
    \label{fig:predicting}
\end{figure*}

In Fig.~\ref{fig:predicting}a we measure how correlated the true value of \Xnb{}, denoted by $\alpha$, is to the true eigen-drop. For SBM, BA, and BTER graphs, the magnitude of $\alpha$ lines up extremely closely with the value of the eigen-drop, showing a correlation coefficient of $r=1.00$. In all cases, $\alpha$ is better correlated to the eigen-drop than degree (as shown by the correlation coefficients $r_{\deg}$). In WS we see considerably more variance than in other ensembles though $\alpha$ is still an excellent predictor of the eigen-drop, at $r=0.98$. This picture repeats itself when using the approximate value of \Xnb{}, $\widetilde{\alpha}$ (Fig.~\ref{fig:predicting}b), and \Xdeg{} (Fig.~\ref{fig:predicting}c). $\widetilde{\alpha}$ seems to slightly underestimate the eigen-drop, while \Xdeg{} has noticeably more variance than the other two statistics, especially in WS. All three statistics are better correlated to the eigen-drop than degree in all graph ensembles. We highlight that even when some of the panels in Fig.~\ref{fig:predicting} are not precisely linear, they all show that the eigen-drop is an increasing function of all of $\alpha$, $\widetilde{\alpha}$, and \Xdeg{}. These results encourage us to use \Xnb{} and \Xdeg{} as immunization strategies. Further, using $\alpha$ has very little advantage over $\widetilde{\alpha}$, and therefore we are justified in using Algorithm~\ref{alg:approx-xnb} instead Algorithm~\ref{alg:naive-xnb} for computational reasons.

\subsection{Immunization with \mbox{X-NB} and \mbox{X-degree}}\label{sec:yyyyyy}
\textbf{How effective are \Xnb{} and \Xdeg{} at immunization?} We remove $1,2$ and $3$ percent of nodes using different strategies and evaluate the resulting eigenvalue. We use the immunization strategies node degree (\texttt{degree}), $k$-core index (\texttt{core}), NetShield (\texttt{NS}), Collective Influence (\texttt{CI}), NB-centrality (\texttt{NB}), approximate \Xnb{} (\texttt{XNB}), and \Xdeg{}  (\texttt{Xdeg}). For computational reasons, we do not use the true value of \Xnb; for more details on baselines see Section~\ref{app:baselines}. In all data sets, \texttt{core} had the least performance and is therefore not shown in our results. We hypothesize this is because many nodes can have the same $k$-core index at the same time, so \texttt{core} cannot identify which is the best one among all of them.

\begin{table}
\centering
\resizebox{0.75\textwidth}{!}{%
\begin{tabular}{r|cc|cc|cc}
\toprule
{} &  \texttt{degree} &      \texttt{NS} &      \texttt{CI} &    \texttt{Xdeg} &      \texttt{NB} &     \texttt{XNB} \\
\midrule
   1\%   &  62.76 &  61.44 &  62.88 &  62.90 &  62.92 &  62.91 \\
BA 2\%   &  68.84 &  66.94 &  68.97 &  68.99 &  69.01 &  69.01 \\
   3\%   &  72.42 &  70.09 &  72.56 &  72.57 &  72.59 &  72.59 \\
\midrule
     1\% &  ~6.28 &  ~6.40 &  ~6.41 &  ~6.45 &  ~6.46 &  ~6.46 \\
BTER 2\% &  10.60 &  10.72 &  10.80 &  10.85 &  10.86 &  10.86 \\
     3\% &  14.31 &  14.40 &  14.55 &  14.61 &  14.63 &  14.63 \\
\midrule
    1\%  &  ~3.31 &  ~3.41 &  ~3.40 &  ~3.43 &  ~3.44 &  ~3.44 \\
SBM 2\%  &  ~6.00 &  ~6.16 &  ~6.19 &  ~6.23 &  ~6.25 &  ~6.25 \\
    3\%  &  ~8.52 &  ~8.66 &  ~8.76 &  ~8.80 &  ~8.82 &  ~8.82 \\
\midrule
   1\%   &  ~1.41 &  ~1.17 &  ~1.50 &  ~1.52 &  ~1.63 &  ~1.63 \\
WS 2\%   &  ~2.52 &  ~2.09 &  ~2.97 &  ~2.98 &  ~3.11 &  ~3.11 \\
   3\%   &  ~3.66 &  ~2.94 &  ~4.41 &  ~4.41 &  ~4.57 &  ~4.58 \\
\bottomrule
\end{tabular}}

\caption{Average percentage eigen-drop (larger is better) on synthetic graphs after removing 1\%, 2\%, and 3\% of the nodes using different strategies. Strategies are (column) grouped in performance tiers. \texttt{NB} and \texttt{XNB} have the best performances.}
\label{tab:imm}
\end{table}

\begin{table*}
\centering
\resizebox{\textwidth}{!}{%
\begin{tabular}{r|ccc|ccc|ccc}
\toprule
{} & \multicolumn{3}{c}{$p=1$} & \multicolumn{3}{c}{$p=10$} & \multicolumn{3}{c}{$p=100$} \\
{} & \texttt{degree} &     \texttt{CI} &   \texttt{Xdeg} & \texttt{degree} &     \texttt{CI} &   \texttt{Xdeg} & \texttt{degree} &     \texttt{CI} &   \texttt{Xdeg} \\
\midrule
\texttt{AS-1}                 &  ~0.74 &  ~0.74 &  \textbf{~2.35} &  ~6.70 &  13.51 &  \textbf{15.43} &  71.65 &  \textbf{78.26} &  75.92 \\
\texttt{AS-2}                 &  ~2.02 &  ~2.02 &  \textbf{~4.00} &  17.09 &  22.36 &  \textbf{28.17} &  87.60 &  \textbf{89.61} &  87.02 \\
\texttt{Social-Slashdot}      &  ~0.95 &  \textbf{~1.02} &  \textbf{~1.02} &  ~4.63 &  ~6.06 &  \textbf{~6.94} &  23.65 &  28.11 &  \textbf{30.30} \\
\texttt{Social-Twitter}       &  \textbf{~2.18} &  \textbf{~2.18} &  ~1.98 &  13.21 &  \textbf{13.97} &  13.68 &  41.10 &  42.88 &  \textbf{43.39} \\
\texttt{Transport-California} &  ~0.00 &  ~0.00 &  \textbf{~0.65} &  \textbf{~2.65} &  ~0.65 &  \textbf{~2.65} &  ~5.09 &  ~5.09 &  \textbf{~7.80} \\
\texttt{Transport-Sydney}     &  \textbf{~0.00} &  \textbf{~0.00} &  \textbf{~0.00} &  ~0.00 &  ~0.00 &  \textbf{~6.50} &  ~0.00 &  ~7.37 &  \textbf{~9.49} \\
\texttt{Web-NotreDame}        &  \textbf{~9.34} &  \textbf{~9.34} &  \textbf{~9.34} &  12.10 &  \textbf{13.79} &  \textbf{13.79} &  14.37 &  14.37 &  \textbf{19.22} \\
\bottomrule
\end{tabular}}

\caption{Average percentage eigen-drop on real networks (larger is better) when removing $p=1,10,$ or $100$ nodes. \texttt{Xdeg} is effective and has log-linear time in the number of nodes. Details about the sizes of these datasets are in Table~\ref{tab:real-data} of the appendix.}
\label{tab:imm-real}
\end{table*}

Table~\ref{tab:imm} shows the percentage reduction of the eigenvalue after immunization, averaged over repetitions on synthetic graphs. We can arrange immunization strategies in tiers according to increasing performance: strategies within a tier have comparable performance across data sets. The third tier is made up of \texttt{NS} and \texttt{degree}. They perform similarly because \texttt{NS} targets the largest eigenvalue of the adjacency matrix, which is largely dominated by node degree. Strategies in this tier perform substantially better than \texttt{core} (not shown), and are very close to the strategies in the next two tiers, i.e. \texttt{degree} is a very strong baseline in this task. The second tier is comprised of \texttt{CI} and \texttt{Xdeg}, with \texttt{Xdeg} having a slight advantage over \texttt{CI}. Finally, the best performance was achieved by \texttt{NB} and \texttt{XNB}. Their performances were almost indistinguishable in most data sets, though they have a small margin over \texttt{CI} and \texttt{Xdeg}.

Strategies in the best two tiers, i.e. \texttt{CI}, \texttt{Xdeg}, \texttt{NB} and \texttt{XNB}, all showed standard deviations of similar magnitude across all data sets (not shown), and the ordering in increasing performance \texttt{CI} < \texttt{Xdeg} < \texttt{NB} $\approx$ \texttt{XNB} is statistically significant at $p \ll 10^{-10}$ (see Appendix~\ref{app:immunization}). Further, the best two (\texttt{NB} and \texttt{XNB)} use the principal NB-eigenvector, whereas \texttt{CI} and \texttt{Xdeg} depend only on node degree, and are therefore much more computationally efficient.

Table~\ref{tab:imm-real} shows the results on real data sets, where we have run only \texttt{degree}, \texttt{CI}, and \texttt{Xdeg} for computational reasons. %We use two social networks, \texttt{Social-Twitter} \citep{de2013anatomy} and \texttt{Social-Slashdot} \citep{LeskovecLDM09}, three transportation networks, \texttt{Transport-Flights} \citep{opsahlBlog}, \texttt{Transport-Sydney} \citep{transportationRepo}, and \texttt{Transport-California} \citep{LeskovecLDM09}, and two Autonomous Systems of the Internet networks, \texttt{AS-1} \citep{ZhangLMZ04} and \texttt{AS-2} \citep{karrer2014percolation}.
We use social networks \citep{de2013anatomy,LeskovecLDM09}, transportation networks, \citep{opsahlBlog,transportationRepo,LeskovecLDM09}, Autonomous Systems (AS) of the Internet networks \citep{ZhangLMZ04,karrer2014percolation}, and web crawl networks \citep{albert1999diameter}. See Section~\ref{app:data} for data set descriptions. We remove from each network $1$, $10$, and $100$ nodes at a time. Again, \texttt{degree} is a very strong baseline, but it is never better than both \texttt{CI} and \texttt{Xdeg} at the same time. All three strategies are able to drastically immunize the autonomous systems networks \texttt{AS-1} and \texttt{AS-2} at $100$ nodes removed, probably owing to the fact that their degree distribution is extremely heterogeneous and thus the nodes with largest degree have a large eigen-drop. In all other networks, \Xdeg{} achieves the best performance. An interesting case is that of \texttt{Transport-Sydney}. The node identified by all three strategies has an eigen-drop of exactly $0.0$. Following Corollary~\ref{cor:degree-one}, this means that the chosen node lies outside of the $2$-core of the graph and thus has no impact on non-zero NB-eigenvalues. After $10$ nodes are removed, both \texttt{degree} and \texttt{CI} continue to achieve zero eigen-drop, while \texttt{Xdeg} already identifies the correct nodes and ahieves $6.50\%$ decrease. Even at $100$ nodes removed, \texttt{degree} cannot identify nodes that generate an eigen-drop. A similar case occurs on \texttt{Transport-California}, where the first node identified by \texttt{degree} and \texttt{CI} generates no eigen-drop, while \texttt{Xdeg} is able to correctly identify influential nodes.

We conclude that in cases where efficiency is of the essence, \texttt{Xdeg} is the best overall immunization strategy, as it has a slight advantage over \texttt{CI} and its performance is close to optimal. If effectiveness is more important than efficiency, either \texttt{XNB} or \texttt{NB} should be used.

%%%%%%%%%%%%%%%%%%%%%%%%%%%%%%%%%%%%%%%%
%%% Conclusion
%%%%%%%%%%%%%%%%%%%%%%%%%%%%%%%%%%%%%%%%

\section{Conclusion}\label{sec:conclusion}
We developed a theory of spectral analysis for the NB-matrix by studying what happens to its largest eigenvalue when one node is removed from the network. Our theory is independent of the structure of the graph, i.e. we make no assumptions of locally tree-like structure or density or length of cycles, as is usual in other studies. We find two new node-level statistics, or centrality measures, \Xnb{} centrality and \Xdeg{}, which are excellent predictors of a node's influence on the largest NB-eigenvalue. Finally, we focus on the application of targeted immunization, where we propose two new algorithms that are shown to be more effective than other strategies for a variety of real and synthetic graph ensembles.

Our techniques open many possibilities for further research. For instance, the left-hand side of Equation~(\ref{eqn:char-poly}) is reminiscent to certain quantities used in the theory of eigenvalue interlacing \citep{godsil2013algebraic}, while the matrix $\left( B' - tI \right)^{-1}$ on the right-hand side is known as the \emph{resolvent} of $B'$, which has many applications in random matrix theory \citep{taoStieltjes}. On a different note, \citet{cvetkovic1980spectra} highlight that most matrices associated to graphs are \emph{linear} combinations of $I$, $A$, and $D$, whereas the NB-matrix is associated with a \emph{quadratic} combination of $I$, $A$, and $D$, via Equation~(\ref{eqn:aux-nb-matrix}). In the future, we will explore which other matrices associated with graphs can be studied via quadratic, or higher order, combinations of $I$, $A$, and $D$.

We focused on the application to targeted immunization. However, other applications of NB-eigenvalues exist -- e.g., community detection and graph distance. Further studying the behavior of NB-eigenvalues under small perturbations of the graph, using the framework presented here, has potential to affect those applications.

\section*{Acknowledgements}
L.T. thanks Gabor Lippner for many invaluable discussions.

\bibliographystyle{unsrtnat}
\setcitestyle{numbers}
\bibliography{references}

\newpage
\appendix

\section{Technical Lemmas}\label{app:lemmas}
In this subsection, $B$ is the NB-matrix of a graph $G$, $P$ is defined in Section~\ref{sec:background}, and $\lambda, \mathbf{v}$ are the Perron eigenvalue and corresponding unit right eigenvector of $B$. Let also $\mathbf{v}^i = \sum_j \mathbf{v}_{j \to i}$ as in Equation~(\ref{eqn:nb-centrality}) and let $\overline{\mathbf{v}} = \left( \mathbf{v}^1, \ldots, \mathbf{v}^n \right)$.

\begin{lemma}\label{lem:right-left}
$P \mathbf{v}$ is a left eigenvector of $B$ corresponding to $\lambda$.
\end{lemma}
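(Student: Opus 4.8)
The plan is to leverage the PT-symmetry of the NB-matrix recorded in Section~\ref{sec:background}, namely that the product $PB$ is symmetric. First I would note two elementary properties of the edge-reversal operator $P$. Since $P$ simply swaps the directed edge $i \to j$ with $j \to i$, it is a permutation matrix that is an involution and is symmetric; that is, $P^T = P$ and $P^2 = I$. These facts let me move $P$ freely between the two sides of a transpose.

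Next I would convert the symmetry of $PB$ into a commutation-type identity, which is the crux of the argument. Taking transposes in $PB = (PB)^T$ and using $P^T = P$ gives
\begin{equation*}
PB = (PB)^T = B^T P^T = B^T P,
\end{equation*}
so that $B^T P = PB$. This single identity does all the work.

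The remainder is a direct computation. Starting from the right-eigenvector equation $B \mathbf{v} = \lambda \mathbf{v}$ and multiplying on the left by $P$, I obtain $PB \mathbf{v} = \lambda P \mathbf{v}$. Substituting $PB = B^T P$ into the left-hand side yields $B^T (P \mathbf{v}) = \lambda (P \mathbf{v})$, which says that $P \mathbf{v}$ is a right eigenvector of $B^T$ for the eigenvalue $\lambda$. Transposing this equality gives $(P \mathbf{v})^T B = \lambda (P \mathbf{v})^T$, i.e.\ $P \mathbf{v}$ is a left eigenvector of $B$ corresponding to $\lambda$, as claimed.

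I do not anticipate a genuine obstacle here: the entire proof collapses once $B^T P = PB$ is in hand, and that follows immediately from the stated symmetry of $PB$ together with $P^T = P$. The only point requiring care is bookkeeping — being consistent about the dictionary between ``left eigenvector of $B$'' and ``right eigenvector of $B^T$'' — but this is routine and introduces no real difficulty.
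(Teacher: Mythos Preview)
Your proof is correct and is essentially identical to the paper's: both derive $B^T P = PB$ from the symmetry of $PB$ together with $P^T = P$ (equivalently $P^2 = I$ for the permutation $P$), and then apply this to $B\mathbf{v} = \lambda \mathbf{v}$ to obtain $B^T(P\mathbf{v}) = \lambda(P\mathbf{v})$. The only cosmetic difference is that the paper phrases the intermediate identity as $B = P B^T P$ before applying it.
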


\begin{proof}
Since $PB$ is symmetric and $P^2 = I$, we have $B = P B^T P$. Now $B \mathbf{v} = \lambda \mathbf{v}$ implies $B^T P \mathbf{v} = \lambda P \mathbf{v}$, which completes the proof.
\end{proof}

\begin{lemma}\label{lem:small-big}
Suppose $\mathbf{v}$ is such that $\mathbf{v}^T P \mathbf{v} = 1$. Let $(\mathbf{f}, -\lambda \mathbf{f})$ be the left unit eigenvector of $B_{aux}$ corresponding to $\lambda$. Then, we have $\| \overline{\mathbf{v}} \| = \mu \| \mathbf{f} \|$, where
\begin{equation}
\mu = \sqrt{\frac{\lambda \left( \lambda^2 - 1 \right)} {1 - \mathbf{f}^T D \, \mathbf{f}}}.
\end{equation}
\end{lemma}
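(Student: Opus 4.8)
The plan is to route both sides of the claimed identity through the vector of NB-centralities $\overline{\mathbf{v}}$, which is the common object linking them. By writing out the left-eigenvector equation for $B_{aux}$ from Equation~(\ref{eqn:aux-nb-matrix}) with ansatz $(\mathbf{f}, -\lambda\mathbf{f})$, one finds $\mathbf{f}$ satisfies $\lambda A\mathbf{f} - (D-I)\mathbf{f} = \lambda^2\mathbf{f}$, the same relation $\overline{\mathbf{v}}$ obeys; hence $\mathbf{f}$ is parallel to $\overline{\mathbf{v}}$ and $\overline{\mathbf{v}} = \mu\,\mathbf{f}$ for the positive scalar $\mu = \|\overline{\mathbf{v}}\|/\|\mathbf{f}\|$. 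The entire task then reduces to pinning down $\mu$ from the two competing normalizations: $\mathbf{v}^T P\mathbf{v} = 1$ on the $B$-side, and $\|(\mathbf{f}, -\lambda\mathbf{f})\| = 1$ on the $B_{aux}$-side.

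First I would extract the workhorse identity from $B\mathbf{v} = \lambda\mathbf{v}$: expanding $(B\mathbf{v})_{k\to l}$ via Equation~(\ref{eqn:nbm-element}) gives $\lambda\mathbf{v}_{k\to l} = \mathbf{v}^k - \mathbf{v}_{l\to k}$, where $\mathbf{v}^k = \sum_i a_{ik}\mathbf{v}_{i\to k}$ as in Equation~(\ref{eqn:nb-centrality}). Summing this over the neighbors $l$ of $k$ yields the auxiliary relation $\lambda\sum_l a_{kl}\mathbf{v}_{k\to l} = (d_k-1)\mathbf{v}^k$, i.e. the outgoing sum at $k$ equals $\tfrac{d_k-1}{\lambda}\mathbf{v}^k$. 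Together these two facts let me convert every edge-indexed sum into a node-indexed quadratic form in $\overline{\mathbf{v}}$.

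Next I would compute the two normalizing quantities. For $\mathbf{v}^T P\mathbf{v} = \sum_{k\to l}\mathbf{v}_{k\to l}\mathbf{v}_{l\to k}$, I substitute $\mathbf{v}_{l\to k} = \mathbf{v}^k - \lambda\mathbf{v}_{k\to l}$ and apply the outgoing-sum relation to get $\mathbf{v}^T P\mathbf{v} = \tfrac{1}{\lambda}\,\overline{\mathbf{v}}^T(D-I)\overline{\mathbf{v}} - \lambda\|\mathbf{v}\|^2$. The nuisance term $\|\mathbf{v}\|^2$ is then removed by squaring the workhorse identity and summing over all directed edges, which gives $(\lambda^2-1)\|\mathbf{v}\|^2 = \overline{\mathbf{v}}^T(D-2I)\overline{\mathbf{v}}$. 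Substituting this and imposing $\mathbf{v}^T P\mathbf{v} = 1$ collapses everything to the clean constraint $(\lambda^2+1)\|\overline{\mathbf{v}}\|^2 - \overline{\mathbf{v}}^T D\,\overline{\mathbf{v}} = \lambda(\lambda^2-1)$.

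Finally I substitute $\overline{\mathbf{v}} = \mu\,\mathbf{f}$, turning the left side into $\mu^2\big[(\lambda^2+1)\|\mathbf{f}\|^2 - \mathbf{f}^T D\,\mathbf{f}\big]$; the unit-length hypothesis on $(\mathbf{f},-\lambda\mathbf{f})$ gives $(1+\lambda^2)\|\mathbf{f}\|^2 = 1$, so the bracket is exactly $1 - \mathbf{f}^T D\,\mathbf{f}$, and solving for $\mu$ produces the stated formula (the positive root being correct since $\overline{\mathbf{v}}$ and $\mathbf{f}$ are non-negative Perron vectors and $\lambda > 1$). The main obstacle is the bookkeeping in the middle step: after squaring and summing, the three resulting edge-sums must be reindexed correctly — in particular recognizing $\sum_{k\to l}\mathbf{v}_{l\to k}^2 = \|\mathbf{v}\|^2$ and $\sum_{k\to l}\mathbf{v}^k\mathbf{v}_{l\to k} = \|\overline{\mathbf{v}}\|^2$ — so that $\|\mathbf{v}\|^2$ cancels and the constraint above emerges exactly.
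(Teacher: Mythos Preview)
Your proposal is correct and follows essentially the same approach as the paper: both derive two linear relations among $\|\mathbf{v}\|^2$, $\|\overline{\mathbf{v}}\|^2$, and $\overline{\mathbf{v}}^T D\,\overline{\mathbf{v}}$ from the eigenvector identity $\lambda\mathbf{v}_{k\to l}+\mathbf{v}_{l\to k}=\mathbf{v}^k$, eliminate $\|\mathbf{v}\|^2$, and then use $(1+\lambda^2)\|\mathbf{f}\|^2=1$ to isolate $\mu^2$. The only cosmetic difference is that the paper obtains its first relation by expanding $\mathbf{v}^T PB\mathbf{v}=\lambda$ directly to get $\|\overline{\mathbf{v}}\|^2-\|\mathbf{v}\|^2=\lambda$, which is a shade quicker than your route through $\mathbf{v}^T P\mathbf{v}$ and the outgoing-sum identity; the second relations (the paper's $(\lambda^2+1)\|\mathbf{v}\|^2+2\lambda=\overline{\mathbf{v}}^T D\,\overline{\mathbf{v}}$ versus your $(\lambda^2-1)\|\mathbf{v}\|^2=\overline{\mathbf{v}}^T(D-2I)\overline{\mathbf{v}}$) are equivalent rearrangements of the same squared identity.
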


\begin{proof}
First, from $\mathbf{v}^T P \mathbf{v} = 1$ we get $\mathbf{v}^T P B \mathbf{v} = \lambda \mathbf{v}^T P \mathbf{v} = \lambda$, and we can expand $\mathbf{v}^T P B \mathbf{v}$ to find $\| \overline{\mathbf{v}} \|^2 - \| \mathbf{v} \|^2 = \lambda$. Second, since $\left(\mathbf{f}, -\lambda \mathbf{f} \right)$ has unit length, we have $\| \mathbf{f} \| ^2 = 1 / \left( \lambda^2 + 1 \right)$. Therefore,
\begin{equation}\label{eqn:v-bar-1}
\mu^2 = \left(\lambda^2 + 1 \right) \left( \lambda + \| \mathbf{v} \|^2 \right).
\end{equation}
Now, $B\mathbf{v} = \lambda \mathbf{v}$ implies $\mathbf{v}_{j \to i} + \lambda \mathbf{v}_{i \to j} = \mathbf{v}^i $ for any $i,j$. Plug this identity in $\| \mathbf{v} \|^2 = \sum_{i,j} a_{ij} \mathbf{v}_{i \to j}^2$ to find
\begin{equation}\label{eqn:v-bar-2}
\| \mathbf{v} \|^2 \left( \lambda^2 + 1 \right) + 2 \lambda = \sum_i \left( \mathbf{v}^i \right)^2 \deg i = \overline{\mathbf{v}}^T D \overline{\mathbf{v}} = \mu^2 \mathbf{f}^T D \, \mathbf{f}.
\end{equation}
Using (\ref{eqn:v-bar-1}) and (\ref{eqn:v-bar-2}) together finishes the proof.
\end{proof}

\begin{remark}
Both $\overline{\mathbf{v}}$ and $\mathbf{f}$ determine the same node centrality ranking, though the latter is easier to compute. However, the normalization $\mathbf{v}^T P \mathbf{v} = 1$ is fundamental in our theory, which makes $\overline{\mathbf{v}}$ the more appropriate choice. Lemma~\ref{lem:small-big} allows us to compute $\overline{\mathbf{v}}$ only with the knowledge of $\mathbf{f},\lambda$ and $D$, which is much more efficient than computing $\mathbf{v}$ and $\overline{\mathbf{v}}$ directly. 
\end{remark}

\section{Complexity Analysis of Algorithm~\ref{alg:xdeg}}\label{app:complexity}
In Section~\ref{sec:complexity-2} we used a standard map (i.e. hash table, or dictionary) to store the \Xdeg{} values in line $2$ of  Algorithm~\ref{alg:xdeg}. Alternatively, we can use an indexed priority queue (IPQ). An IPQ is a data structure that behaves like a priority queue except that, additionally, elements in the IPQ can be updated efficiently. The underlying data structure is a max-heap. An IPQ can find the maximum element in the heap, as well as update any element, in logarithmic time.

In this case, line $2$ of Algorithm~\ref{alg:xdeg} takes $m$ operations to compute the \Xdeg{} values plus $n$ operations to heapify the IPQ. Further, lines $4$ and $9$ take $O \left( \log n \right)$ time, which yields a time complexity of 
\begin{equation}
O \left( m + n + p \log n + \sum_{j=1}^p d_{i_j}+ \log n \sum_{j=1}^p D_{i_j} \right).
\end{equation}

\subsection{Homogeneous degree distribution}\label{app:complexity-homogeneous}
In networks with a homogeneous degree distribution (e.g. Poisson) we can estimate $d_{i_j} \approx \langle k \rangle$ and $D_{i_j} \approx \langle k \rangle^2 $, where $\langle k \rangle$ is the average degree. This yields $O \left( m + n + p \langle k \rangle^2 \log n \right)$ total complexity for the IPQ version, while the map version gives $O \left( m + p n + p \langle k \rangle^2 \right)$. If $p = O(n)$ and $\langle k \rangle = O(1)$, the IPQ version scales better in the worst case scenario.

\subsection{Heterogeneous degree distribution}\label{app:complexity-heterogeneous}
In networks whose degree distribution is well approximated by a power law, the probability of finding a node of degree $d$ scales as $d^{-\gamma}$, for some $\gamma > 0$. In this case, the first few nodes removed by Algorithm~\ref{alg:xdeg} will usually have large degree, comparable to the largest degree in the network, $d_{i_j} = O\left( d_{\max} \right)$ for each $j$. Further, in the worst case scenario, each of their neighbors will also have a degree comparable to $d_{\max}$ and thus $D_{i_j} = O \left( d_{\max}^2 \right)$ for each $j$. Using $d_{\max} = O \left( n^\frac{1}{\gamma - 1} \right)$ \citep{barabasi2014network} yields $O \left( m + p n + p n^\frac{2}{\gamma - 1} \right)$ for the map version and $O \left( m + p n^\frac{2}{\gamma - 1} \log n \right)$ for the IPQ version. In the typical case $ 2 \leq \gamma \leq 3$, the exponent $\frac{2}{\gamma - 1}$ varies between $1$ and $2$.

\subsection{Average runtime}\label{app:complexity-runtime}
We have provided the analysis of worst case scenario runtime. However, the average runtime of both the IPQ and map versions is close to linear, as shown in Figure~\ref{fig:scaling}. This figure was generated by first sampling a degree sequence from a power-law density $p_d \propto d^{-\gamma}$, and then generating a graph at random using the configuration model. Self-loops and multi-edges were removed and only the largest component was kept. Each marker is the average of $30$ repetitions. We used $p = 100$.

\begin{table*}
\resizebox{\textwidth}{!}{%
\begin{tabular}{r|ccrrrr}
\toprule
 & nodes & edges & $n$ & $m$ & $\lambda_1$ & $d_{\max}$ \\ %& $D_{\max}$ \\
\midrule
\texttt{AS-1} \citep{ZhangLMZ04} & AS & digital communication  &     34,761 &     107,720 &  151.442 &   2,760 \\%&     99,285 \\
\texttt{AS-2} \citep{karrer2014percolation} & AS & digital communication &     22,963 &      48,436 &   64.678 &   2,390 \\%&     31,955 \\
\texttt{Social-Slashdot} \citep{LeskovecLDM09} & users & friendships &     77,360 &     469,180 &  128.550 &   2,539 \\%&    137,595 \\
\texttt{Social-Twitter} \citep{de2013anatomy} & users & friendships &    456,290 &  12,508,221 &  636.147 &  51,386 \\%&  5,160,771 \\
%\texttt{Transport-Airports} \citep{opsahlBlog} & airports & flights    &      2,905 &      15,645 &   61.326 &     242     \\%& 13,569 \\
\texttt{Transport-California} \citep{transportationRepo} & intersections & roads &  1,957,027 &   2,760,388 &    3.321 &      12 \\%&         35 \\
\texttt{Transport-Sydney} \citep{transportationRepo} & intersections & roads     &     32,956 &      38,787 &    2.266 &      10 \\%&         18 \\
\texttt{Web-NotreDame} \citep{albert1999diameter} & websites & hyperlinks      &    325,729 &   1,090,108 &  175.657 &  10,721 \\ %&    556,979 \\
\bottomrule
\end{tabular}}

\caption{Real-world data sets. $n$: number of nodes, $m$: number of edges, $\lambda_1$: largest NB-eigenvalue, $d_{\max}$: largest degree. AS stands for autonomous systems.}
\label{tab:real-data}
\end{table*}

\section{Experimental Setup}\label{app:experiments}

\subsection{Base lines}\label{app:baselines}

\paragraph{Degree.} The degree of a node $i$, denoted $d_i$ is the number of neighbors it has in the graph. Nodes of degree $1$ have zero Collective Influence, \Xdeg{}, NB-centrality, \Xnb{} centrality.

\paragraph{$k$-core index.} The $k$-core index of a node, also called \emph{coreness}, is defined as follows. First, iteratively remove all nodes of degree $1$ until there are none. All nodes removed in this step are assigned a value of $k$-core index of $1$. Then, iteratively remove all nodes of degree $2$; all nodes removed at this step have $k$-core $2$. Repeat this process until there are no more nodes in the graph. Notably, following Corollary~\ref{cor:degree-one}, all nodes with $k$-core value of $1$ have zero NB-centrality.

\paragraph{NB-centrality.} The NB-centrality of a node is defined in Equation~(\ref{eqn:nb-centrality}). It was proposed in \cite{radicchi2016leveraging} as an indicator of influential spreaders on locally tree-like networks for the SIR model.

\paragraph{NetShield.} NetShield is an efficient algorithm that identifies a subset of nodes with the highest ``shield-value'', which is defined as the impact a node, or set of nodes, has on the largest eigenvalue of the adjacency matrix \cite{ChenTPTEFC16}.

\paragraph{Collective Influence.}\label{app:ci}
The Collective Influence (CI) of node $i$ is
\begin{equation}
CI_i = \left( d_i - 1 \right) \sum_j a_{ij} \left( d_j -1 \right),
\end{equation}
though this definition can be generalized to include nodes in arbitrarily large neighborhoods around $i$ \cite{morone2015influence}. Note that this is quite similar in nature to \Xdeg{} in Equation~(\ref{eqn:x-degree-centrality-minus-one}). We think of \Xdeg{} as a second-order aggregation of the values $\left( d_j - 1 \right)$ of the neighbors of $i$, while CI is a first-order aggregation. Further, one can apply Algorithm~\ref{alg:xdeg} to perform targeted immunization based on CI instead of \Xdeg{}, and hence they have the same running time complexity (see Section~\ref{sec:complexity-2} and Appendix~\ref{app:complexity}). \citet{morone2016collective} claim that the CI algorithm runs in $O\left( n \log n \right)$ time, though we were not able to reproduce this result. In any case, any efficient algorithm that computes CI can be used to compute \Xdeg{} as well.

\subsection{Data sets}\label{app:data}
All synthetic graphs have $n=10^5$ nodes and parameters were chosen so that the average degree was approximately $12$. SBM graphs were generated with two blocks, or communities, so that the average within-block degree is $9$ and the between-block degree is $3$. %HG graphs generated with degree exponent $\gamma = 2.6$, using version 1.0.3 of the Hyperbolic Graph Generator \cite{aldecoa2015hyperbolic}.
WS graphs generated with rewiring probability $0.1$. BTER graphs were generated with target average local clustering coefficient of $0.98$, and target global clustering coefficient of $0.4$. BTER graphs were generated with the authors' implementation \citep{KoldaPPS14}; all other graphs were generated using NetworkX \citep{hagberg2008exploring} version 2.3. After generation, we extracted the largest connected component of each graph and converted all multi-edges to single edges and deleted self-loops. $100$ graphs were generated from each ensemble. Table~\ref{tab:real-data} describes the real data sets used. Directed networks were converted to undirected, and only the largest connected component of each data set was used.

\subsection{Approximating the Largest Eigenvalue}\label{app:approximating-the-eigenvalue}
Since nodes of large degree are bound to induce a larger eigen-drop than those of small degree, we chose target nodes at random by sampling $1\%$ of nodes from each graph, proportionally to their degree. This was achieved by sampling one edge at random, with replacement, and then choosing one of its endpoints randomly. This yields a probability of sampling node $i$ equal to $d_i / 2m$.

\subsection{Predicting the Eigen-drop}\label{app:predicting-the-eigengap}
Nodes were sampled in the same way as in  \ref{app:approximating-the-eigenvalue}. Figure \ref{fig:predicting} shows correlation coefficients, defined as the covariance divided by the product of the standard deviations of the two variables. We computed the correlation between the eigen-drop and each of the statistics: $\alpha$, $\widetilde{\alpha}$, \Xdeg{}, and degree. No three-way correlation was computed.

\subsection{Immunization with \mbox{X-NB} and \mbox{X-degree}}\label{app:immunization}
To confirm the ordering in increasing performance \texttt{CI} $<$ \texttt{Xdeg} $<$ \texttt{NB} $\approx$ \texttt{XNB}, we used a one-sided Wilcoxon signed-rank test, which is a non-parametric version of the paired T-test. In a paired sample setting, this test tests the null hypothesis that the median of the differences between the two samples is positive, against the alternative that it is negative. Therefore, a small $p$-value means that there is little probability that the first sample's median is smaller than the second's. For each graph ensemble and each percentage of removed nodes (1\%, 2\%, 3\%), the ranking \texttt{CI} $<$ \texttt{Xdeg} $<$ \texttt{NB} was confirmed with $p \ll 10^{-10}$ in all cases. Further, we have \texttt{NB} $<$ \texttt{XNB} in WS networks ($ p \ll 10^{-10}$) and BTER networks ($ p < 0.05$), and \texttt{NB} $>$ \texttt{XNB} in BA networks ($ p \ll 10^{-10}$) and SBM networks ($ p < 0.05$). We summarize these results by writing \texttt{CI} $<$ \texttt{Xdeg} $<$ \texttt{NB} $\approx$ \texttt{XNB}.

\end{document}